\documentclass[sigconf]{acmart}

\usepackage[english]{babel}
\usepackage{blindtext}
\usepackage{multirow}

% caopeirui add
\usepackage[ruled,vlined,linesnumbered]{algorithm2e}
\usepackage{subfigure}
\usepackage[noend]{algpseudocode}

% Copyright
\renewcommand\footnotetextcopyrightpermission[1]{} % removes footnote with conference info
\setcopyright{none}
%\setcopyright{acmcopyright}
%\setcopyright{acmlicensed}
%\setcopyright{rightsretained}
%\setcopyright{usgov}
%\setcopyright{usgovmixed}
%\setcopyright{cagov}
%\setcopyright{cagovmixed}

\settopmatter{printacmref=false, printccs=false, printfolios=true}

% DOI
\acmDOI{}

% ISBN
\acmISBN{}

%Conference
%\acmConference[Submitted for review to SIGCOMM]{}
%\acmYear{2018}
%\copyrightyear{}

%% {} with no args suppresses printing of the price
\acmPrice{}

\begin{document}
\title{Efficient Routing Algorithm Design for Large DetNet}

%\titlenote{Produces the permission block, and copyright information}
%\subtitle{Extended Abstract}

% \author{Paper \# 81, 12 pages}
\author{Shizhen Zhao, Tianyu Zhu, Ximeng Liu}
% \author{}
% \author{Peirui Cao, Shizhen Zhao}
% \authornote{Corresponding Author.}
% \author{Min Yee Teh, Yunzhuo Liu, Xinbing Wang}
% \authornote{Note}
% \orcid{1234-5678-9012}
% \affiliation{%
%   \institution{Affiliation}
%   \streetaddress{Address}
%   \city{City} 
%   \state{State} 
%   \postcode{Zipcode}
% }
% \email{email@domain.com}

% The default list of authors is too long for headers}
\renewcommand{\shortauthors}{X.et al.}

\begin{abstract}
Deterministic Networking (DetNet) is a rising technology that offers deterministic delay \& jitter and zero packet loss regardless of failures in large IP networks. In order to support DetNet, we must be able to find a set of low-cost routing paths for a given node pair subject to delay-range constraints. Unfortunately, the \textbf{Delay-Range} Constrained Routing (DRCR) problem is NP-Complete. Existing routing approaches either cannot support the delay-range constraints, or incur extremely high computational complexity.

% The requirement of DetNet imposes a delay-range constraint, which makes the routing problem NP-Complete. Pulse$+$ adopts a branch-and-bound methodology and optimizes its pruning strategies for higher efficiency.

% To ensure zero packet loss, we find a pair of active \& backup paths; to guarantee deterministic delay \& jitter, we impose a delay diff constraint on the end-to-end delays of both paths.

We propose Pulse$+$, a highly scalable and efficient DRCR problem solver. Pulse$+$ adopts a branch-and-bound methodology and optimizes its pruning strategies for higher efficiency. We also integrate Pulse$+$ with a divide-and-conquer approach and propose CoSE-Pulse$+$ to find a pair of active/backup paths that meet DetNet's delay-range and delay-diff constraints. Both Pulse$+$ and CoSE-Pulse$+$ offer optimality guarantee. Notably, although Pulse$+$ and CoSE-Pulse$+$ do not have a polynomial worst-case time complexity, their empirical performance is superior. We evaluate Pulse$+$ and CoSE-Pulse$+$ against the K-Shorst-Path and Lagrangian-dual based algorithms using synthetic test cases generated over networks with thousands of nodes and links. Both Pulse$+$ and CoSE-Pulse$+$ achieve significant speedup. To enable reproduction, we open source our code and test cases at \cite{repository}.
\end{abstract}

%The recently proposed Deterministic Networking (DetNet) Architecture offers the capability to carry real-time traffic with bounded latency. 

\maketitle

\section{Introduction}

Modern mission-critical real-time network applications, e.g., tele-surgery~\cite{zhang2022application}, Eastern-Data-Western-Computing~\cite{li2022eastern}, etc., require bounded end-to-end delay \& jitter and zero packet loss rate even under extreme scenarios with network failures. However, existing internet is designed on a best-effort basis, and thus may not be able to support such applications with stringent Quality-of-Service (QoS) requirements. To deal with the above challenges, DetNet Architecture~\cite{rfc8655} was proposed to offer bounded delay and bounded delay jitter guarantee. DetNet aims to achieve bounded delay through Cycle Specified Queuing and Forwarding (CSQF), and guards against network failures using 1+1 path protection. However, it remains an open problem to find a pair of paths that meet DetNet's stringent delay requirements.

We model DetNet's routing problem as an Srlg-disjoint Delay-Range Constrained Routing (DRCR) problem. Shared risk link group (Srlg) is a widely adopted concept to guard against network failures. An Srlg contains a set of links that share a common physical resource (cable, conduit, node, etc.). An Srlg is typically considered as an independent failure domain, and a failure of an Srlg will cause all links in this Srlg to fail simultaneously. To achieve service protection in DetNet, we need to find two paths that do not share any Srlg. With a pair of active/backup paths, DetNet packets are replicated at the source and then transmitted along both paths and finally de-duplicated at the destination. To ensure deterministic delay under path failures, the end-to-end delay of both paths cannot differ too much. This introduces a delay-range constraint to DetNet's routing problem. Unfortunately, existing routing algorithms cannot support the delay-range constraint or scale to large networks with prohibitively high complexity simultaneously.

We solve DetNet's routing problem in two steps. First, given an active path, we solve the Delay-Range Constrained Routing (DRCR) problem to find a single backup path that meets DetNet's delay diff requirement. Second, given a delay upper bound $U$ and a delay diff $\delta$, we solve the Srlg-disjoint DRCR problem to find an active and a backup path at the same time such that both paths' end-to-end delays are no larger than $U$ and these two paths' end-to-end delay diff is no larger than $\delta$. Notably, both of the DRCR and the Srlg-disjoint DRCR problems are NP-Complete.

The main challenges of the DRCR and the Srlg-disjoint DRCR problems come from the delay lower bound introduced by DetNet's delay diff requirement. If there were no delay lower bound, the DRCR problem degenerates to the classical Delay Constrained Routing (DCR) problem. Although the DCR problem is NP-Complete~\cite{handler1980a}, many algorithms have been proposed to solve the DCR problem with efficacy and these algorithms can be generally grouped into four categories: 1) the K-Shortest-Path (KSP) approaches; 2) the Lagrangian-dual approaches~\cite{handler1980a, Beasley1989an, santos2007an}, 3) the dynamic programming approaches~\cite{Beasley1989an, dumitrescu2003improved, zhu2012a, thomas2019an}  and 4) the Pulse approaches~\cite{lozano2013on, sedeno2015an, cabrera2020an}. We tried to extend these approaches to handle the delay lower bound. Unfortunately, both the KSP and the Lagrangian-dual approaches may have to explore a large number of paths before finding a valid path, and thus can be extremely slow in practice; the dynamic programming approaches cannot avoid routing loops when a delay lower bound exists. The pulse approach is promising, but if we directly apply it to the DRCR problem, the optimal solution may be incorrectly skipped. The Srlg-disjoint DRCR problem is even more difficult than the DRCR problem. In addition to the challenges faced by the DRCR problem, the Srlg-disjoint DRCR problem may also encounter a ``trap'' problem, i.e., many active paths found do not have an Srlg-disjoint backup path. Although researchers have proposed ``conflict set'' to solve the trap problem~\cite{xu2004on, rostami2007cose}, existing conflict-set solvers cannot handle any delay constraints.

We propose Pulse$+$ and CoSE-Pulse$+$, to solve the DRCR problem and the Srlg-disjoint DRCR problem with optimality guarantee. The detailed contributions are as follows:

\noindent\textbf{1)} For the DRCR problem, we propose Pulse$+$, an algorithm derived from Pulse. We disable the ``dominance check'' pruning strategy in Pulse$+$ because it may cause sub-optimality in DRCR and then develop two pruning acceleration strategies (the default strategy is Large-Delay-First sorting).

\noindent\textbf{2)} For the Srlg-disjoint DRCR problem, we propose a conflict-set-finding algorithm, Conflict-Pulse$+$, which can properly handle delay constraints. Then, we integrate Pulse$+$ with a divide-and-conquer approach, and propose CoSE-Pulse$+$.

\noindent\textbf{3)} We generate synthetic test cases based on real Internet topologies and randomly-generated topologies with up to 10000 nodes, and evaluate Pulse$+$ and CoSE-Pulse$+$ against Delay-KSP, Cost-KSP and Lagrangian-KSP algorithms. Pulse$+$ and CoSE-Pulse$+$ can finish all the test cases within 200 milliseconds, while other algorithms fail to solve some cases in the time limit of 10 seconds. Further, as the network scale increases, the efficiency improvement of Pulse$+$ and CoSE-Pulse$+$ becomes more evident compared to other algorithms.

\emph{This paper does not raise ethical issues.}

\section{Background}
In order to support time-sensitive applications in large IP networks, DetNet was standardized to offer a strict guarantee on end-to-end delay, delay jitter and packet loss~\cite{rfc8655}. Such DetNet could benefit a wide range of mission-critical real-time applications, such as telesurgery ~\cite{zhang2022application}, Eastern-Data-Western-Computing~\cite{li2022eastern}, etc. In this section, we give an overview of DetNet's design and pinpoint its challenges.

\subsection{Bounded Delay \& Jitter with CSQF}
Due to the non-deterministic queuing delays in Ethernet switches, today's packet-based networks fail to offer any Quality-of-Service (QoS) guarantee. To eliminate such non-determinism, the IETF DetNet working group developed the Cycle Specified Queuing and Forwarding (CSQF) mechanism~\cite{chen2019segment}. By specifying the sending cycle at each node along a path, CSQF guarantees bounded delay and jitter.

To support CSQF, all network nodes are synchronized within sub-microsecond accuracy, which can be achieved using the Precise Timing Protocol (PTP) as described in IEEE 802.1AS~\cite{8021as}. Then, the sending times of all the output interfaces at different nodes are divided into synchronized time intervals of equal length $T_c$. Each time interval is called a \emph{cycle}. Each output port of a node typically contains multiple output queues. During each cycle, only one queue is open, and all the packets in that queue will be transmitted. For each packet, CSQF leverages Segment Routing (SR) to specify its transmission cycle at each hop. When every node along the path follows the instructions carried in the packet, bounded delay and jitter are achieved.

We use an example in Fig. \ref{fig:CSQF_illustration} to illustrate the idea of CSQF. A packet $p$ from node $A$ to node $E$ takes the path $A\rightarrow B\rightarrow C\rightarrow D\rightarrow E$. The egress ports of $A,B,C$ and $D$ are time synchronized with cycle length equal to $T_c$. Assume that CSQF assigns a cycle list, $<1, 5, 7, 10>$, to the packet $p$, and the deadline of the packet $p$ is at the beginning of 
cycle $12$. Then, as long as the ``link delay + max processing delay'' at different hops are less than $<3T_c, T_c, 2T_c, T_c>$ respectively, the packet $p$ can be delivered to its destination $E$ before cycle $12$, and the maximum delay jitter at the node $E$ is ``$T_c+\text{max\_proc\_delay}-\text{min\_proc\_delay}$''.

In practice, the cycle assignment of a packet $p$ at an egress port is realized by specifying the queue that $p$ needs to enter. Each egress port contains $k\geq 3$ queues, and these queues open in a round-robin fashion. Consider a hop $AB$ in $p$'s path, and assume that the packet $p$ is scheduled at cycle $c_a$ at node $A$ and $c_b$ at node $B$. Then, the earliest possible arrival time of the packet $p$ at node $B$ is ``$c_a T_c+\text{link\_delay}+\text{min\_proc\_delay}$'', and the latest possible arrival time is ``$(c_a+1) T_c+\text{link\_delay}+\text{max\_proc\_delay}$''. To ensure that the packet $p$ is ready at node $B$ before cycle $c_b$, we must have
\begin{equation}\label{eqn:detnet1}
c_bT_c\geq (c_a+1) T_c+\text{link\_delay}+\text{max\_proc\_delay}.
\end{equation}
On the other hand, the packet $p$ cannot arrive at node $B$ too early. Otherwise, it may be transmitted $kT_c$ slots earlier.
\begin{equation}\label{eqn:detnet2}
(c_b-k+1)T_c<c_a T_c+\text{link\_delay}+\text{min\_proc\_delay}.
\end{equation}
Constraints (\ref{eqn:detnet1}) and (\ref{eqn:detnet2}) imply that
\begin{equation}\label{eqn:detnet_requirement}
\left\{\begin{aligned}
&(k-2)T_c>\text{max\_proc\_delay}-\text{min\_proc\_delay},\\
&(c_b-c_a)T_c\geq T_c+\text{link\_delay}+\text{max\_proc\_delay},\\
&(c_b-c_a)T_c\leq (k-1)T_c+\text{link\_delay}+\text{min\_proc\_delay}.
\end{aligned}\right.
\end{equation}
The first inequality of (\ref{eqn:detnet_requirement}) implies that at least three queues are required at each output interface. The second and the third inequalities of (\ref{eqn:detnet_requirement}) impose a lower bound and an upper bound for the per-hop delay. Therefore, to achieve a target end-to-end delay, the sum of a path's link delays must be confined in a range.

\begin{figure}[t]
    \centering
    \includegraphics[scale=0.5]{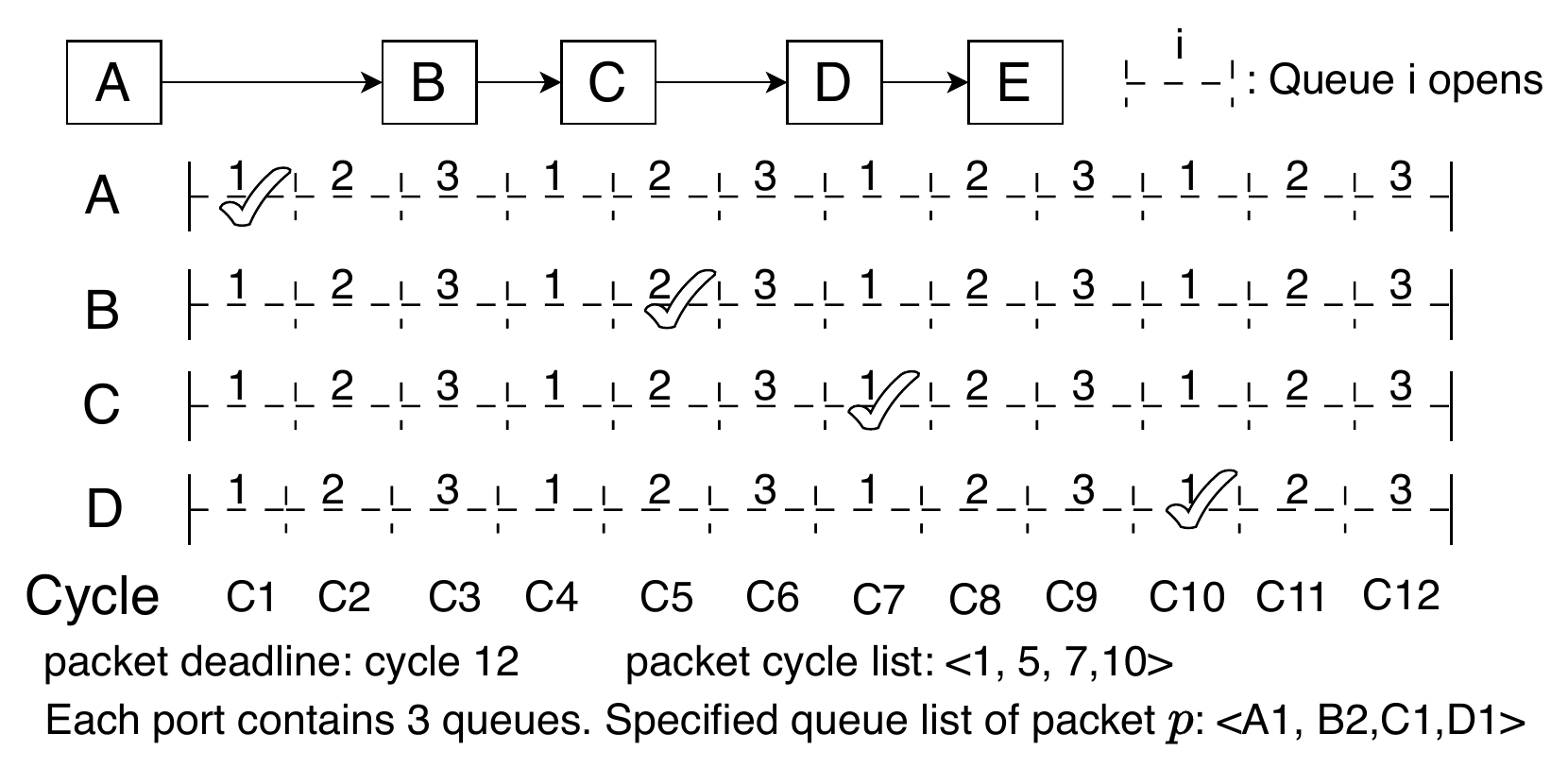}
    \caption{An example of CSQF.}
    \label{fig:CSQF_illustration}
\end{figure}

\subsection{Zero Packet Loss with 1+1 Protection}\label{sec:zero-loss}
CSQF offers bounded delay and jitter guarantee when all the network components work as expected. However, network failures are unavoidable and even a single packet loss may not be tolerable for some mission-critical applications. DetNet adopts 1+1 path protection to defend against packet loss~\cite{rfc8655}. In 1+1 protection, a backup path is used to route DetNet flow packets together with the active path. At the source node, there is a Packet Replication Function (PRF) that duplicates the received packets onto two egress ports that forward the packets to both the active path and the backup path. At the destination node, the received packets are de-duplicated using a Packet Elimination Function (PEF).

In order to guarantee bounded jitter in case of packet loss, the delay diff between the active path and the backup path must be small~\cite{sharma2022routing}. Ideally, if the received packets from both paths satisfy the ``spacing constraint'', packet recovery is easy. As shown in the lower part of Fig. \ref{fig:show_diff_need_small}, when packet 1 is lost on the active path, the receiver can recover from this packet loss by the right next packet received from the backup path. In contrast, when the ``spacing constraint'' is not met, a large packet reordering buffer would be required and a large packet reordering latency would be added to the end-to-end delay. As shown in the upper part of Fig. \ref{fig:show_diff_need_small}, when packet 1 is lost on the active path, the receiver cannot deliver the received packet 2 to the corresponding application and has to put it into its reordering buffer; only until packet 1 is received from the backup path, packet 1 and packet 2 can then be delivered. 
In this case, having a packet loss could significantly hurt the determinism of packet delivery.

\begin{figure}[t]
    \centering
    \includegraphics[scale=0.36]{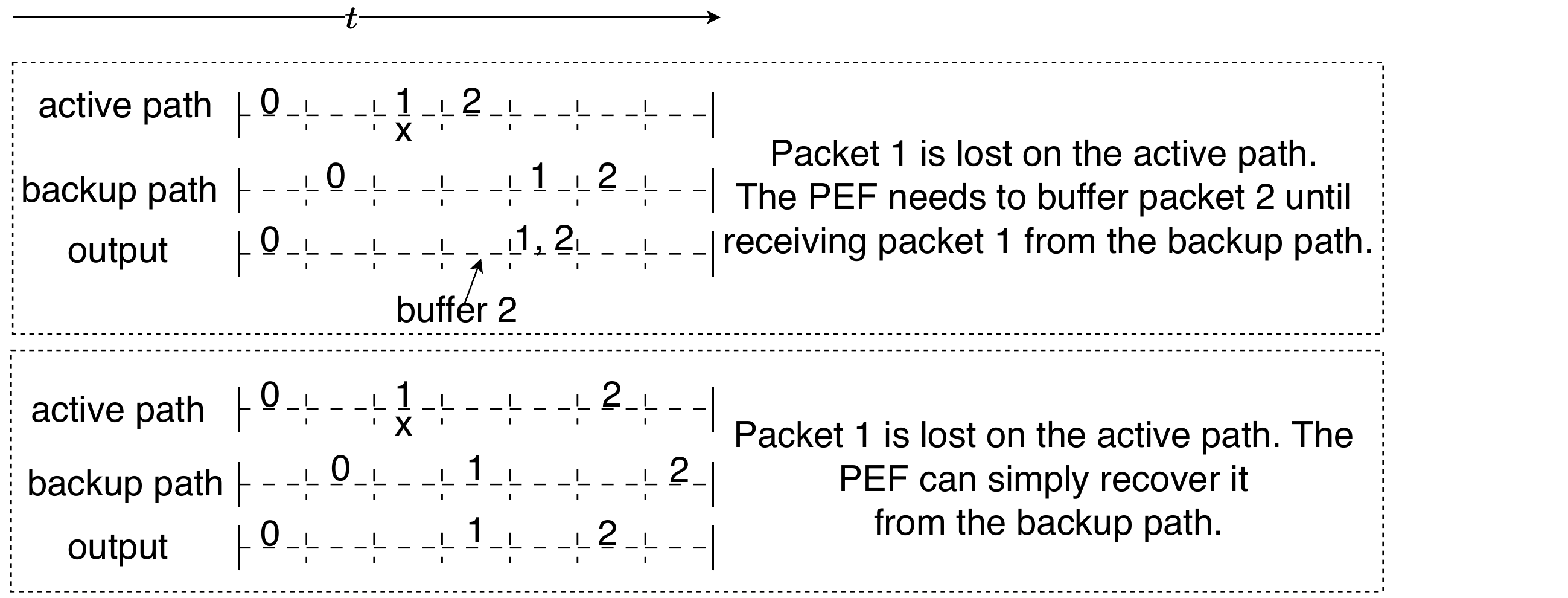}
    \caption{Delay diff constraint is critical for DetNet.}
    \label{fig:show_diff_need_small}
\end{figure}

\section{Mathematical Model}
We model a network using a directed graph $G=(V,E)$, where $V$ is the set of nodes and $E$ is the set of directed links. Each directed link $e\in E$ is associated with a delay $d(e)$ and a cost $c(e)$. We use $\text{From}(e)$ and $\text{To(e)}$ to denote the two ends of the link $e$. Given a source node $s\in V$ and a destination node $t\in V, s\neq t$, a link sequence $P=[e_1,e_2,...,e_h]$ is called a path from $s$ to $t$ if and only if $\text{From}(e_1)=s, \text{To}(e_1) = \text{From}(e_2), ..., \text{To}(e_{h-1}) = \text{From}(e_h), \text{To}(e_h) = v$. A path $P$ is called elementary if no vertex is repeated in the path. We are interested in finding elementary paths to avoid routing loops. (The IP based forwarding will fail if a path contains a loop.) The cost and delay of a path $P$ are denoted by $d(P)=\sum_{e\in P} d(e)$ and $c(P)=\sum_{e\in P} c(e)$, respectively.

We use Shared risk link group (Srlg) to model network failures. Let $R$ be the set of Srlgs in the network $G=(V,E)$. Each Srlg $r\in R$ contains a set of links that share a common physical resource (cable, conduit, node, etc.). Thus, a failure of $r$ will cause all links in this Srlg fail simultaneously. Each link $e\in E$ may belong to multiple Srlgs. We use $\Omega(e)\subseteq R$ to denote the set of Srlgs that contain the link $e$. Then, for each path $P$, $\Omega(P)=\cup_{e\in P}\Omega(e)$ represents all the Srlgs that contain at least one link in $P$. To guard against network failures in DetNet, we study two problems in this paper.

\noindent\textbf{Delay-Range Constrained Routing (DRCR) Problem:} Given two distinct nodes $s,t\in V$ and a delay range $[L,U]$, find a min-cost path subject to the delay range constraint:
\begin{equation}\label{formulation:delay_range_routing}
\boxed{
\begin{aligned}
\min_{P}\hspace{2mm} & \hspace{10mm}c(P)=\sum_{e\in P} c(e),\\
\textbf{s.t.}\hspace{2mm} & P\text{ is an elementary path from $s$ to $t$,}\\
& L\leq d(P)=\sum_{e\in P} d(e)\leq U.
\end{aligned}
}
\end{equation}
This formulation applies to the scenarios where we have an active path $P_a$ and want to find an Srlg-disjoint backup path $P_b$ with end-to-end delay satisfying $d(P_a)-\delta\leq d(P_b) \leq d(P_a)+\delta$, where $\delta$ is the maximum allowable delay diff.

\noindent\textbf{Srlg-disjoint DRCR Problem:} Given two distinct nodes $s,t\in V$, a delay upper bound $U$ and a delay diff $\delta$, find a pair of Srlg-disjoint active and backup paths such that the active path has the minimum cost and the delay diff of the two paths does not exceed $\delta$, i.e.,
\begin{equation}\label{formulation:srlg_delay_range_routing}
\boxed{\begin{aligned}
\min_{P_a,P_b}\hspace{2mm} & \hspace{10mm}c(P_a)=\sum_{e\in P_a} c(e),\\
\textbf{s.t.}\hspace{2mm} & P_a, P_b\text{ are two elementary paths from $s$ to $t$,}\\
& d(P_a)\leq U, d(P_a)-\delta\leq d(P_b)\leq \min\{U, d(P_a)+\delta\},\\
& \Omega(P_a)\cap \Omega(P_b)=\emptyset.
\end{aligned}}
\end{equation}
This formulation is useful when we want to find the active path and the backup path at the same time.

\noindent\textbf{Remark on the objective function of (\ref{formulation:srlg_delay_range_routing}):} In DetNet, there are still many best effort packets, which can tolerate delay jitters and packet loss.  Since 1+1 protection is expensive as it doubles the traffic in the network and introduces extra processing cost, in practice 1+1 protection is only enabled for time-critical packets. As a result, the active path is used all the time, while the backup path is used less frequently. Therefore, we decide to optimize the cost of the active path, rather than optimizing the sum cost of both paths.

\noindent\textbf{Remark on the generality of finding Srlg-disjoint paths:} In some circumstances, one may care about finding link-disjoint or node-disjoint paths instead of Srlg-disjoint paths. We argue that finding link-disjoint or node-disjoint paths is a special case of finding Srlg-disjoint paths. Specifically, if every Srlg contains only one link, then finding Srlg-disjoint paths degenerates to finding link-disjoint paths; if every node in $G$, except the source node $s$ and the destination node $t$, corresponds to an Srlg, and each Srlg $u\in V, u\neq s,t$ contains all the ingress and egress links of $u$, then finding Srlg-disjoint paths degenerates to finding node-disjoint paths.

\noindent\textbf{Remark on Algorithmic Complexity:} Both the DRCR and the Srlg-disjoint DRCR problems are NP-Complete. By setting $L=0$, the DRCR problem degenerates to the Delay Constrained Routing (DCR) problem, which was proven to be NP-Complete in~\cite{handler1980a}. Thus, the DRCR problem is also NP-Complete. In addition, given a DCR problem instance, if we create a side link $e'$ from $s$ to $t$ with $d(e')\leq U$ and a large $c(e')>\sum_{e\in E}c(e)$, let this link $e'$ form a separate Srlg, and set $\delta=U$, then this DCR problem instance will reduce to an Srlg-disjoint DRCR problem instance. Therefore, the Srlg-disjoint DRCR problem is also NP-Complete.

Since the DRCR and the Srlg-disjoint DRCR problems are NP-Complete, it is impossible to design polynomial algorithms unless $P=NP$. The objective of this paper is thus to design computational efficient algorithms for these two problems, and demonstrate that they are empirically efficient to support large DetNets with thousands of nodes and links.

\section{Algorithm Design Principles}

\subsection{DRCR Problem}

To the best of our knowledge, there exists only one paper~\cite{ribeiro1985a} that directly studied the DRCR problem. However, the algorithm proposed in~\cite{ribeiro1985a} is merely a heuristic solution with no optimality guarantee. Nevertheless, if there were no lower-bound constraint on the end-to-end delay, the DRCR problem degenerates to the classical DCR problem\footnote{Also known as the Constrained Shortest Path (CSP) problem in literature.}. Existing solutions to the DCR problem can be grouped into $4$ categories. We examine these solutions one by one to identify promising algorithm-design directions for the DRCR problem.

\noindent\textbf{1) K-Shortest-Path (KSP) approaches~\cite{yen1971finding}:} The key idea is to examine all the paths with cost ordered from low to high, and the first path that meets the delay constraint gives the optimal solution. This approach is efficient if the KSP algorithm can terminate with a small $k$ value. However, when the delay bound is tight (i.e., $U-L$ is small), finding a path that meets the delay constraint may take a large number of iterations, which makes the KSP algorithm prohibitively expensive. (See Appendix \ref{appendix:ksp}.)

\noindent\textbf{2) Lagrangian-dual approaches~\cite{handler1980a, Beasley1989an, santos2007an}:} The key idea is to run the KSP algorithm based on a combined weight function $w_{\lambda}(e)=c(e)+\lambda d(e)$, where $c(e)$ and $d(e)$ are the delay and the cost of the link $e$. By properly choosing $\lambda$, the Lagrangian-dual approach could dramatically reduce the number of iterations required to find the optimal path. The Lagrangian-dual approach is effective in dealing with the delay upper bound. However, as we apply this approach to handle delay lower bound in the DRCR problem, we may need to use a \emph{negative} value for $\lambda$ in certain cases. When $\lambda$ is negative, the weight function $w_{\lambda}(e)$ may become negative and the KSP algorithm no longer applies. (See Appendix \ref{appendix:lagrangian}.)

\noindent\textbf{3) Dynamic programming approaches~\cite{Beasley1989an, dumitrescu2003improved, zhu2012a, thomas2019an}:} For a given destination node $t$, let $\psi(u,T)$ be the minimum cost of all the paths from $u$ to $t$ whose end-to-end delay is no larger than $T$. Then, $\psi(u,T)=\min_{e=(u,v)}\{\psi(v,T-d(e))+c(e)\}$. Then, starting from $\psi(t,0)=0$, we can compute each $\psi(u,T)$ and the corresponding min-cost path from $u$ to $t$ using dynamic programming. When there is no delay lower bound, all the min-cost paths found must be elementary, i.e., every node is visited at most once. Otherwise, by removing a cycle from the resulting path, a lower cost path can be found. Unfortunately, when a delay lower-bound exists, such approaches cannot guarantee the optimal path to be elementary. Hence, we decide not to pursue this direction.

\noindent\textbf{4) Pulse approaches~\cite{lozano2013on, sedeno2015an, cabrera2020an}: } These approaches use depth first search or KSP search to find a solution to the DCR problem, and adopts several pruning strategies to accelerate the search. Such approaches are the most efficient in solving the DCR problems among all the approaches. However, when a delay lower bound exists, some pruning strategies in Pulse no longer work, which reduces the pruning efficiency. Nevertheless, Pulse offers a promising framework for solving DRCR problems, and the challenge is to develop new optimization techniques to improve the pruning efficiency.

\subsection{Srlg-disjoint DRCR Problem}\label{sec:understand_srlg_drcr}

To the best of our knowledge, finding Srlg-disjoint path pairs with delay requirements has never been studied before. Nevertheless, if we remove the delay constraint, the degenerated problem did receive much attention in the past decades. We examine different solutions to find the promising algorithm-design directions and identify the corresponding challenges.

\noindent\textbf{1) Active-Path-First approaches~\cite{xu2002an, li2002efficient}:} The APF approaches first compute an active path without considering the need to find a backup path, and then try to find an Srlg-disjoint backup path by removing those links affected by the active path. If there exists no Srlg-disjoint backup path, the APF approaches may try a different active path or stop based on certain criterion. In this paper, we tried two APF approaches, one uses the KSP algorithm to find active paths (Appendix \ref{appendix:ksp}) and another one uses the Lagrangian-dual algorithm to find active paths (Appendix \ref{appendix:lagrangian}). Despite of the simplicity of the APF approaches, they may suffer from the so-called \emph{trap} problem~\cite{xu2004on}, i.e., many active paths do not have an Srlg-disjoint backup path due to some special network structure (see an example in Section \ref{sec:conflict_set}) and blindly trying different active paths can be highly inefficient. 

\noindent\textbf{2) Conflict-Set based approaches~\cite{xu2004on, rostami2007cose, xie2018divide}:} The concept of ``conflict set'' was proposed in \cite{xu2004on} to solve the \emph{trap} problem. Given an active path $P_a$, if there exists no Srlg-disjoint backup path, one can always find a small Srlg set $T\subseteq \Omega(P_a)$, such that every active path whose Srlg set contains $T$ does not have an Srlg-disjoint backup path. This set $T$ is called a ``conflict set''. If we could avoid finding active paths whose Srlg set contains a conflict set, then it would be much easier to find an Srlg-disjoint backup path. Here, the key is to compute the conflict set. Unfortunately, existing solutions~\cite{xu2004on, rostami2007cose, xie2018divide} only focused on the unconstrained routing scenarios without delay constraints, and thus cannot be used to find conflict sets for the Srlg-disjoint DRCR problem.

\section{DRCR Algorithm}\label{sec:solve_drcr}
We propose Pulse$+$ to solve the DRCR problem in this section. Let $P_{s\rightarrow t}^{\text{min\_delay}}$ be the elementary path from $s$ to $t$ with the minimum delay and let $P_{s\rightarrow t}^{\text{min\_cost}}$ be the elementary path from $s$ to $t$ with the minimum cost. Clearly, the end-to-end delay of the first path is no larger than that of the second path, i.e., $d(P_{s\rightarrow t}^{\text{min\_delay}})\leq d(P_{s\rightarrow t}^{\text{min\_cost}})$. 

All the DRCR problems can be grouped into the following six cases according to the relationship between the delay upper bound $U$, delay lower bound $L$, the min-delay path's delay $d(P_{s\rightarrow t}^{\text{min\_delay}})$ and the min-cost path's delay $d(P_{s\rightarrow t}^{\text{min\_cost}})$:

\noindent\textbf{Case 1 (Infeasible):}$L \leq U < d(P_{s\rightarrow t}^{\text{min\_delay}}) \leq d(P_{s\rightarrow t}^{\text{min\_cost}})$. It is impossible to find a path with delay smaller than the minimum delay $d(P_{s\rightarrow t}^{\text{min\_delay}})$.

\noindent\textbf{Case 2 (Degenerated Case):}  $L \leq d(P_{s\rightarrow t}^{\text{min\_delay}}) \leq U < d(P_{s\rightarrow t}^{\text{min\_cost}})$. All paths can meet the delay lower bound. Thus, the delay lower bound can be ignored and this case can be solved by the original Pulse algorithm~\cite{lozano2013on}.

\noindent\textbf{Case 3 (Trivial):}  $L \leq d(P_{s\rightarrow t}^{\text{min\_delay}}) \leq d(P_{s\rightarrow t}^{\text{min\_cost}}) \leq U$. The min-cost path $P_{s\rightarrow t}^{\text{min\_cost}}$ is the optimal solution.

\noindent\textbf{Case 4 (Non-trivial):}  $d(P_{s\rightarrow t}^{\text{min\_delay}}) < L < U < d(P_{s\rightarrow t}^{\text{min\_cost}})$. 

\noindent\textbf{Case 5 (Trivial):} $d(P_{s\rightarrow t}^{\text{min\_delay}}) < L < d(P_{s\rightarrow t}^{\text{min\_cost}}) < U$. The min-cost path $P_{s\rightarrow t}^{\text{min\_cost}}$ is the optimal solution.

\noindent\textbf{Case 6 (Non-trivial):}  $d(P_{s\rightarrow t}^{\text{min\_delay}}) < d(P_{s\rightarrow t}^{\text{min\_cost}}) < L < U$.

In this section, we first review the Pulse algorithm for the degenerated case, and then propose our algorithm to solve the two non-trivial cases.

\subsection{Review of the Pulse Algorithm}
In Case 2, the DRCR problem degenerates to the Delay Constrained Routing (DCR) problem:
\begin{equation}\label{formulation:delay_ub_routing}
\boxed{
\begin{aligned}
\min_{P}\hspace{2mm} & \hspace{10mm}c(P)=\sum_{e\in P} c(e)\\
\textbf{s.t.}\hspace{2mm} & P\text{ is an elementary path from $s$ to $t$,}\\
& d(P)=\sum_{e\in P} d(e)\leq U.
\end{aligned}
}
\end{equation}

The pulse algorithm (see Algorithm \ref{algorithm:pulse}) adopts a branch-and-bound method to find the optimal solution of (\ref{formulation:delay_ub_routing}). It defines global variables $\text{tmp\_min\_cost}$ and $P_{s\rightarrow t}^{\text{opt}}$ to track the best path found, and then performs depth first search using a stack. In the depth first search, lines 5-11 check the path found and update the best path found so far; lines 12-14 adopt three pruning strategies to cut some search branches; lines 15-17 iterate through all the egress links of the node $u$ and add the new branches to the stack. The optimal path $P_{s\rightarrow t}^{\text{opt}}$ must be an elementary path. Otherwise, $P_{s\rightarrow t}^{\text{opt}}$ will contain at least one cycle, and by removing this cycle from $P_{s\rightarrow t}^{\text{opt}}$, we could obtain another path with lower end-to-end cost.

We delve into the details of the three pruning strategies below. The first strategy ``$d(P_{s\rightarrow u})+d(P_{u\rightarrow t}^{\text{min\_delay}})> U$'' prunes branches by feasibility. It indicates that it is impossible to obtain a path with end-to-end delay no larger than $U$ through this branch. The second strategy ``$c(P_{s\rightarrow u})+c(P_{u\rightarrow t}^{\text{min\_cost}})\geq \text{tmp\_min}$'' prunes branches by optimality. It indicates that it is impossible to obtain a path with lower cost through this branch. The third strategy ``$CheckDominance(u, P_{s\rightarrow u}) == true$'' prunes branches by dominance. Given two paths $P_{s\rightarrow u}^1$, $P_{s\rightarrow u}^2$ from $s$ to $u$, $P_{s\rightarrow u}^1$ dominates $P_{s\rightarrow u}^2$ if and only if $d(P_{s\rightarrow u}^1)\leq d(P_{s\rightarrow u}^2)$ and $c(P_{s\rightarrow u}^1)\leq c(P_{s\rightarrow u}^2)$. Then, if we have searched the branch $P_{s\rightarrow u}^1$, searching the branch $P_{s\rightarrow u}^2$ cannot yield a better solution and thus can be skipped.

\begin{algorithm}[t!]
\SetAlgoLined
\KwData{A network $G(V,E)$, a source node $s$, a destination node $t$, and a delay upper bound $U$.}

\KwResult{The optimal path $P_{s\rightarrow t}^{\text{opt}}$ from $s$ to $t$.}

%\tcp{Define global variables.} 
Use $\text{tmp\_min}$ and $P_{s\rightarrow t}^{\text{opt}}$ to track the best path found. Initialize $\text{tmp\_min} = +\infty$.

Use a stack $S$ to store all the branches to be explored. Initialize $S=\{\text{empty\_path}\}$.

\tcp{Use deep first search to find $P_{s\rightarrow t}^{\text{opt}}$.}
\While{$S$ is not empty} {
    Let path $P_{s\rightarrow u}=S.\text{pop()}$. Let $u$ be the end node of $P_{s\rightarrow u}$. Set $u=s$ if $P_{s\rightarrow u}$ is empty.
    
    \If{$u == t$} {
        \If{$d(P_{s\rightarrow u})\leq U$ and $c(P_{s\rightarrow u})<\text{tmp\_min}$} {
            $\text{tmp\_min}=c(P_{s\rightarrow u});$
            
            $P_{s\rightarrow t}^{\text{opt}}=P_{s\rightarrow u};$
        }
        \textbf{continue};
    }

    \tcp{Cut branches when possible.}   
    \If{$d(P_{s\rightarrow u})+d(P_{u\rightarrow t}^{\text{min\_delay}})> U$ or
        $c(P_{s\rightarrow u})+c(P_{u\rightarrow t}^{\text{min\_cost}})\geq \text{tmp\_min}$ or
        $CheckDominance(u, P_{s\rightarrow u}) == true$}{
        \textbf{continue};
    }

    \tcp{Add new branches.}
    \For{every egress link $e$ of the node $u$} {
        $S.\text{push}(P_{s\rightarrow u}\cup \{e\})$;
    }
}

return $P_{s\rightarrow t}^{\text{opt}}$;
\caption{Pulse Algorithm~\cite{lozano2013on}}
\label{algorithm:pulse}
\end{algorithm}

\subsection{Pulse$+$: Handling the Delay Range}
We propose Pulse$+$, an enhanced Pulse algorithm, to compute the optimal solutions for the general DRCR problems. In this section, we detail the key difficulties encountered and the optimization techniques proposed for Pulse$+$. (We also studied the KSP-based approach and the Lagrangian-Dual based approach in this paper. Since these two approaches are less efficient than Pulse$+$, we put the detailed design in Appendix \ref{appendix:other_approaches} for reference.)

\subsubsection{Dominance Check is Unsafe}\label{subsection:dominance_check_unsafe}
The efficiency of the Pulse-like algorithms heavily relies on the pruning strategies. The original Pulse algorithm adopts three pruning strategies, i.e., ``$d(P_{s\rightarrow u})+d(P_{u\rightarrow t}^{\text{min\_delay}})> U$'', ``$c(P_{s\rightarrow u})+c(P_{u\rightarrow t}^{\text{min\_cost}})\geq\text{tmp\_min}$'' and ``$CheckDominance(u, P_{s\rightarrow u}) == true$''. The first two pruning strategies are still valid, but the third one may prune a branch incorrectly for the DRCR problem and result in a sub-optimal solution.

We use two examples in Figure \ref{fig:dominance_check_example} to demonstrate the incorrectness of the Dominance check strategy. In the two examples, we need to find a min-cost path from $A$ to $E$, such that the end-to-end delay is 8 (or the delay range is $[8,8]$). Suppose that we have explored the branch $P_1 = A\rightarrow D\rightarrow C$, and we are to examine the path $P_2 = A\rightarrow B\rightarrow C$. In Figure \ref{fig:dominance_check_example}(a), $d(P_1)=3<4=d(P_2),c(P_1)=3<4=c(P_2)$, and thus $P_2$ will be pruned by the dominance check. Clearly, after pruning $P_2$, we can no longer find a path from $A$ to $E$ that meets the delay range constraint. Note that the path $A\rightarrow B \rightarrow C\rightarrow E$ meets the end-to-end delay requirement.

\begin{figure}[t]
    \centering
    \includegraphics[scale=0.7]{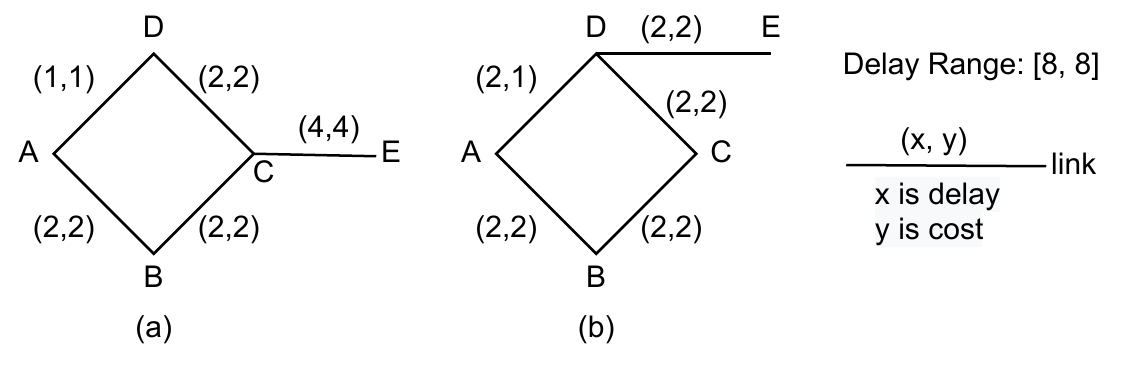}
    \caption{Dominance check is unsafe for DRCR.}
    \label{fig:dominance_check_example}
\end{figure}

The example in Figure \ref{fig:dominance_check_example}(a) hints us to modify the dominance check as follows. Given two paths $P_{s\rightarrow u}^1, P_{s\rightarrow u}^2$ from $s$ to $u$, $P_{s\rightarrow u}^1$ dominates $P_{s\rightarrow u}^2$ if and only if $d(P_{s\rightarrow u}^1)=d(P_{s\rightarrow u}^2)$ and $c(P_{s\rightarrow u}^1)\leq c(P_{s\rightarrow u}^2)$. Unfortunately, this modified dominance check is still incorrect. Consider the example in Figure \ref{fig:dominance_check_example}(b). Suppose that we are to examine $P_2 = A\rightarrow B\rightarrow C$ after exploring $P_1 = A\rightarrow D\rightarrow C$. Since $d(P_1)=4=d(P_2),c(P_1)=3<4=c(P_2)$, $P_2$ will be pruned by the dominance check, and then we can no longer find the optimal solution  $A\rightarrow B\rightarrow C\rightarrow D\rightarrow E$. Apparently, this optimal solution is attained by concatenating $P_2$ and $C\rightarrow D\rightarrow E$. However, $C\rightarrow D\rightarrow E$ cannot be concatenated with $P_1$ because the node $D$ has already been visited by $P_1$.

Admittedly, if the two paths $P_{s\rightarrow u}^1$ and $P_{s\rightarrow u}^2$ contain the same set of nodes and satisfy $d(P_{s\rightarrow u}^1)=d(P_{s\rightarrow u}^2), c(P_{s\rightarrow u}^1)\leq c(P_{s\rightarrow u}^2)$, then $P_{s\rightarrow u}^1$ will dominate $P_{s\rightarrow u}^2$. However, this pruning strategy requires memorizing (delay, cost) pairs for all the visited node sets and the total number of different node sets grows exponentially with respect to the network size, making the algorithm not scale to large networks.

Based on the above considerations, we decide to remove the ``dominance check'' pruning strategy in the Pulse$+$ algorithm. Thus, the detailed pruning strategy of Pulse$+$ (see the box in line 15 of Algorithm \ref{algorithm:pulse_plus}) becomes
\begin{equation}\label{eqn:prune_strategy}
\boxed{\begin{aligned}
&d(P_{s\rightarrow u})+d(P_{u\rightarrow t}^{\text{min\_delay}})> U&\\
\text{or}\hspace{2mm} &c(P_{s\rightarrow u})+c(P_{u\rightarrow t}^{\text{min\_cost}})\geq \text{tmp\_min}&
\end{aligned}}
\end{equation}

\begin{algorithm}[t]
\SetAlgoLined
\KwData{A network $G(V,E)$, a source node $s$, a destination node $t$, and a delay range $[L,U]$.}

\KwResult{The optimal path $P_{s\rightarrow t}^{\text{opt}}$ from $s$ to $t$.}

Use $\text{tmp\_min}$ and $P_{s\rightarrow t}^{\text{opt}}$ to track the best path found. Initialize $\text{tmp\_min} = +\infty$.

\tcp{Sort egress links to accelerate Pulse$+$.}

For every node $v\in V$, sort all the egress links of $v$ from lowest to highest based on the weight $w(e)=d(e)+d(P_{\text{To}(e)\rightarrow t}^{\text{min\_delay}})$.

\tcp{Use depth  first search to find $P_{s\rightarrow t}^{\text{opt}}$.}
Use a stack $S$ to store all the branches to be explored. Initialize $S=\{\text{empty\_path}\}$.

\While{$S$ is not empty} {
    Let path $P_{s\rightarrow u}=S.\text{pop()}$. Let $u$ be the end node of $P_{s\rightarrow u}$. Set $u=s$ if $P_{s\rightarrow u}$ is empty.
    
    \If{$u == t$} {
        \tcp{Validate the path found.}
        \If{$\boxed{L\leq d(P_{s\rightarrow u})\leq U}$} {
            \If{$c(P_{s\rightarrow u})<\text{tmp\_min}$}{
                $\text{tmp\_min}=c(P_{s\rightarrow u});$
                
                $P_{s\rightarrow t}^{\text{opt}}=P_{s\rightarrow u};$
            }
        }
        \textbf{continue};
    }

    \tcp{Cut branches when possible.}
    \If{$\boxed{P_{s\rightarrow u} \text{ should be pruned}}$}{
        \textbf{continue};
    }

    \tcp{Add new branches.}
    \For{every egress link $e$ of the node $u$} {
        \If{the node $\text{To}(e)$ is not visited in $P_{s\rightarrow u}$} {
            $S.\text{push}(P_{s\rightarrow u}\cup \{e\})$;
        }
    }
}

return $P_{s\rightarrow t}^{\text{opt}}$;
\caption{Pulse$+$ Algorithm}
\label{algorithm:pulse_plus}
\end{algorithm}

\subsubsection{Visited Node Tracking is Necessary}
Unlike the DCR problem, given a path $P_{s\rightarrow t}$ with duplicated nodes and $L\leq d(P_{s\rightarrow t})\leq U$, we cannot remove cycles from $P_{s\rightarrow t}$ to obtain a lower-cost path, as the resulting path may violate the delay lower bound. As a result, if we do not enforce that each node can only be visited once, the resulting optimal path may not be an elementary path. Take Figure \ref{fig:dominance_check_example}(b) for example. If we allow visiting a node more than once, the optimal solution would be $A\rightarrow D\rightarrow C\rightarrow D\rightarrow E$, which has an end-to-end cost of $7$. In contrast, the optimal elementary path is $A\rightarrow B\rightarrow C\rightarrow D\rightarrow E$, which has an end-to-end cost of $8$. According to the above analysis, we decide to explicitly track the visited nodes and make sure that no node is visited more than once (see line 19 in Algorithm \ref{algorithm:pulse_plus}).

\subsubsection{Largest-Delay-First Searching Strategy}\label{section:ldf}
Having removed the ``dominance check'' pruning strategy, the pruning efficiency can be impaired dramatically. We thus propose the Largest-Delay-First (LDF) Searching strategy to improve the pruning efficiency for Pulse$+$. At the beginning, since $\text{tmp\_min}=+\infty$, we can only rely on the feasibility pruning strategy ``$d(P_{s\rightarrow u})+d(P_{u\rightarrow t}^{\text{min\_delay}})> U$'' to cut branches. The LDF searching strategy explores egress links with higher end-to-end delay to the destination ($w(e)=d(e)+d(P_{\text{To}(e)\rightarrow t}^{\text{min\_delay}})$) first. The high-priority branches in the Pulse$+$ search either can be cut by the feasibility pruning strategy, or yield paths with end-to-end delay close to the delay upper bound $U$. As a result, the $\text{tmp\_min}$ value can be effectively reduced in the early stages of the Pulse$+$ DFS search, and then the optimality pruning strategy ``$c(P_{s\rightarrow u})+c(P_{u\rightarrow t}^{\text{min\_cost}})\geq \text{tmp\_min}$'' becomes more effective. Note that we use a stack to perform DFS, and a stack is last-in-first-out. Hence, to implement LDF, we need to sort all the egress links of a node $u\in V$ in an increasing order of the end-to-end delay from a link $e$ to the destination node $t$ (see line 2 in Algorithm \ref{algorithm:pulse_plus}).

We use a randomly selected test case to illustrate why LDF searching strategy could accelerate pulse$+$ search. This test case is generated in a network with 4000 nodes and 99779 links. In order to quantify the progress of Pulse$+$ search, we introduce a new concept called \emph{searching space size} ($S^3$) for every partial path $P_{s\rightarrow u}$ in the stack $S$ (see line 3 in Algorithm \ref{algorithm:pulse_plus}). The first partial path in $S$ is an empty path. An empty path means that Pulse$+$ needs to explore the whole searching space. Therefore, we set $S^3(\text{empty\_path})=1$. Every partial path $P_{s\rightarrow u}$ may generate a number of sub-paths in lines 18-21 of Algorithm \ref{algorithm:pulse_plus}. We set $S^3(P_{s\rightarrow u}\cup \{e\})=S^3(P_{s\rightarrow u})/n$, where $n$ is the number of sub-paths of $P_{s\rightarrow u}$. We say $P_{s\rightarrow u}$ is explored if and only if all of its sub-paths are explored. In Fig. \ref{fig:search_space_compare}, we plot the total searched space size of all the explored partial paths versus the number of iterations of the \textbf{while} loop (lines 4-23 in Algorithm \ref{algorithm:pulse_plus}). We can see that the searched space size increases much faster after enabling the LDF searching strategy. As a result, Pulse$+$ with LDF requires fewer number of iterations to find the optimal solution (see Fig. \ref{fig:cost_compare}).

We generate DRCR test cases (see Section \ref{sec:drcr_cases}), each of which belongs to either Case 4 or Case 6. For each test case, we record the number of iterations in Pulse$+$ search and summarize the percentile values in Table \ref{tab:apparoch reduce iteration number for DRCR}. We can see that 
enabling LDF reduces the number of iterations consistently.
\begin{figure}[t]
    \centering
    \subfigure[Searched space size vs. iteration]{
    \label{fig:search_space_compare}
    \includegraphics[scale=0.155]{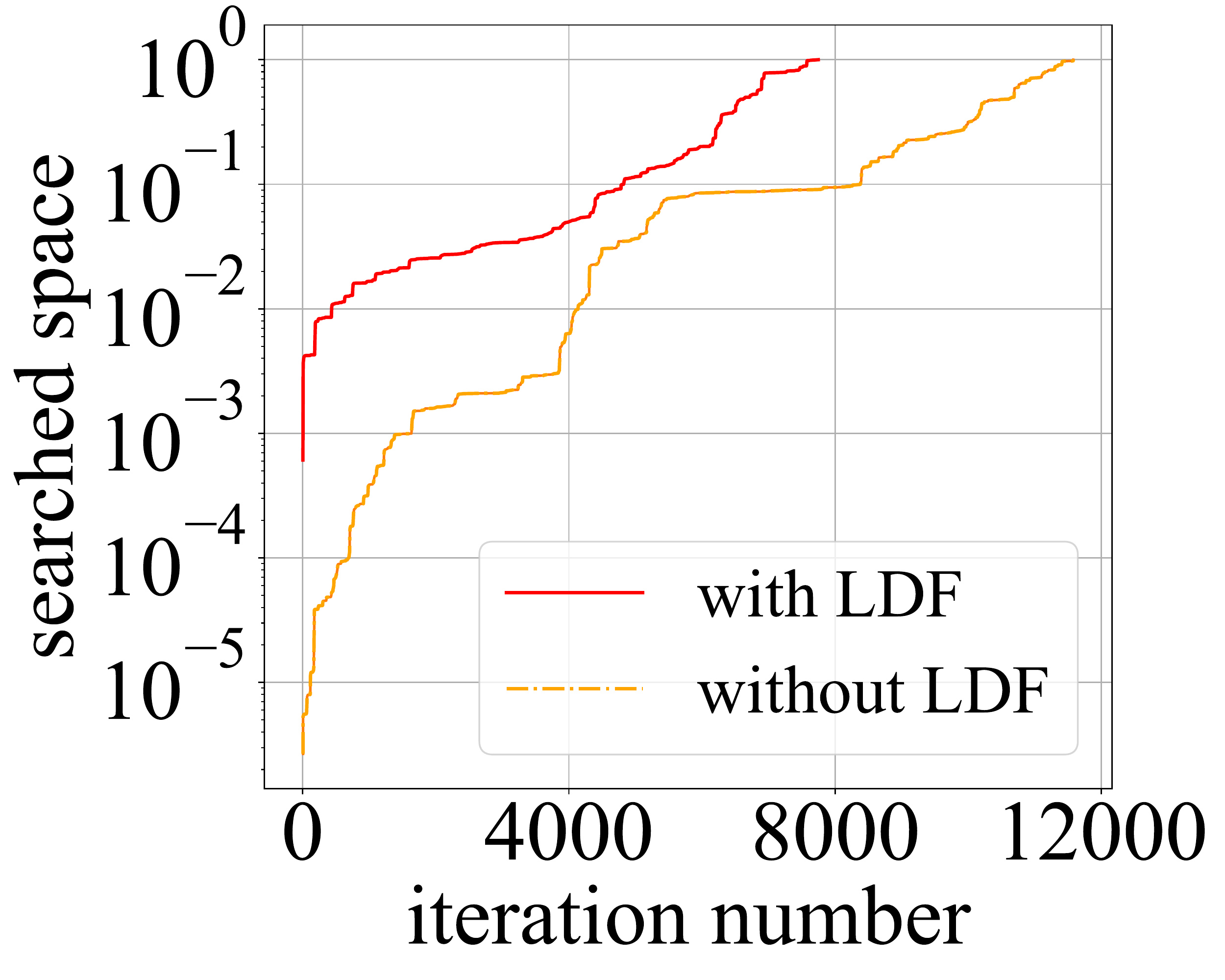}
    }
    \subfigure[Cost of best path vs. iteration]{
    \label{fig:cost_compare}
    \includegraphics[scale=0.155]{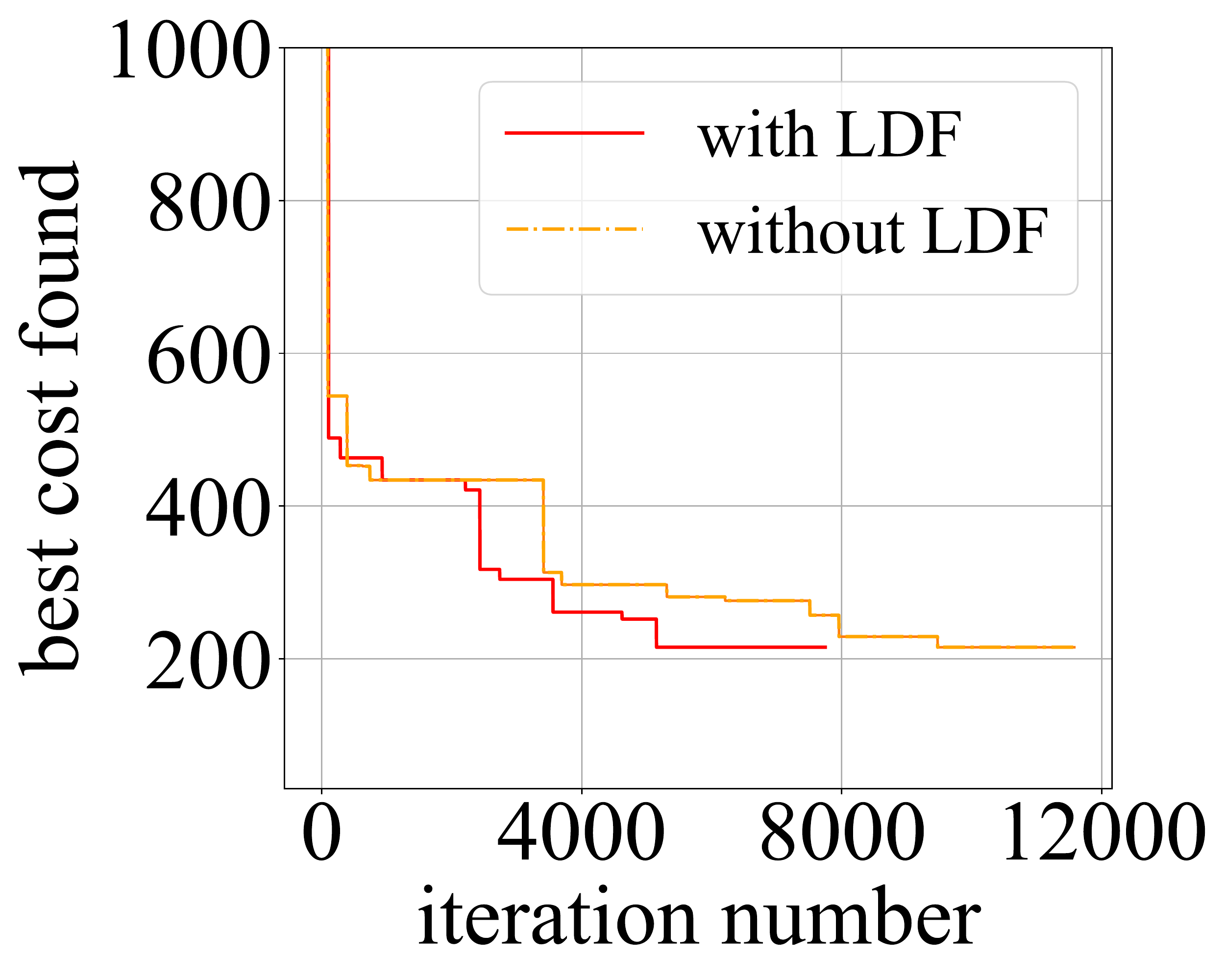}
    }
    \caption{LDF accelerates Pulse$+$ search.}
\end{figure}

\begin{table}[t]
\resizebox{\columnwidth}{!}{
\begin{tabular}{lllll}
\hline
                                                         & \textbf{DRCR Case}          & \textbf{without LDF} & \textbf{with LDF} & \textbf{Joint-Pruning} \\ \hline
% \multicolumn{1}{l|}{\multirow{2}{*}{\textbf{25th Pct.}}} & \multicolumn{1}{l|}{Case 4} & 3232 
%      & 2163                & 862                    \\ \cline{2-5} 
% \multicolumn{1}{l|}{}                                    & \multicolumn{1}{l|}{Case 6} & 15615             & 10089                & 3772                   \\ \hline
\multicolumn{1}{l|}{\multirow{2}{*}{\textbf{50th Pct.}}} & \multicolumn{1}{l|}{Case 4} & 14199              & 8323                & 1606                   \\ \cline{2-5} 
\multicolumn{1}{l|}{}                                    & \multicolumn{1}{l|}{Case 6} & 46413             & 26719                & 6811                   \\ \hline
\multicolumn{1}{l|}{\multirow{2}{*}{\textbf{75th Pct.}}} & \multicolumn{1}{l|}{Case 4} & 43528             & 25218                & 2610                   \\ \cline{2-5} 
\multicolumn{1}{l|}{}                                    & \multicolumn{1}{l|}{Case 6} & 131327             & 76254               & 12043                  \\ \hline
\multicolumn{1}{l|}{\multirow{2}{*}{\textbf{99th Pct.}}} & \multicolumn{1}{l|}{Case 4} & 817173            & 489734               & 8392                   \\ \cline{2-5} 
\multicolumn{1}{l|}{}                                    & \multicolumn{1}{l|}{Case 6} & 1912911           & 1242840               & 57716                  \\ \hline
\end{tabular}
}
\caption{Strategies to reduce the number of iterations.}
\label{tab:apparoch reduce iteration number for DRCR}
\end{table}

\noindent\textbf{Another approach to accelerate Pules$+$:} LDF is not the only approach to accelerate Pules$+$. In Appendix \ref{appendix:joint_pruning}, we offer a joint-pruning approach, which could achieve even higher pruning and searching efficiency than LDF. However, the joint-pruning approach requires calculating a cost function beforehand, which incurs significant overhead. (For each test case, this overhead accounts for nearly 90\% of the total computation time.) After weighing the pros and cons, we set LDF as the default search acceleration strategy for Pulse$+$. 

\subsubsection{Optimality Guarantee of Pulse$+$}
\begin{theorem}\label{thm:optimality_pulse}
For any DRCR problem instance, as long as Pulse$+$ returns a solution, this solution must be optimal.
\end{theorem}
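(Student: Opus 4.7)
The plan is to prove optimality by showing that Pulse$+$ cannot prune away any optimal solution. Let $P^*$ be an arbitrary optimal feasible path with edge sequence $[e_1,\ldots,e_h]$, vertex sequence $s=u_0,u_1,\ldots,u_h=t$, and length-$i$ prefix $P^*_i=[e_1,\ldots,e_i]$. The central invariant I would establish by induction on $i$ is: during the execution of Pulse$+$, for every $i\in\{0,1,\ldots,h\}$, either (a) the prefix $P^*_i$ is at some point pushed onto the stack $S$, or (b) the global variable $\text{tmp\_min}$ has already become at most $c(P^*)$. Because $\text{tmp\_min}$ is monotonically non-increasing and its final value equals the cost of the returned path, applying the invariant at $i=h$ yields $c(P_{s\rightarrow t}^{\text{opt}})\le c(P^*)$, which proves the theorem.

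The base case $i=0$ is immediate, since the empty path is placed onto $S$ at initialization. For the inductive step, assume $P^*_i$ is eventually pushed onto $S$ and analyze the iteration in which it is popped. If $u_i=t$, then $P^*_i=P^*$ is validated against $[L,U]$ and sets $\text{tmp\_min}$ to at most $c(P^*)$. Otherwise I would examine the three mechanisms that could prevent extending to $P^*_{i+1}$. Since $P^*$ is elementary and feasible, its suffix from $u_i$ to $t$ is an elementary $u_i\rightarrow t$ path, so $d(P_{u_i\rightarrow t}^{\text{min\_delay}})\le d(P^*)-d(P^*_i)$ and $c(P_{u_i\rightarrow t}^{\text{min\_cost}})\le c(P^*)-c(P^*_i)$. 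The feasibility prune then satisfies $d(P^*_i)+d(P_{u_i\rightarrow t}^{\text{min\_delay}})\le d(P^*)\le U$, so it never triggers on a prefix of $P^*$. The optimality prune satisfies $c(P^*_i)+c(P_{u_i\rightarrow t}^{\text{min\_cost}})\le c(P^*)$, so if it does trigger then $\text{tmp\_min}\le c(P^*)$ and alternative (b) is attained permanently. Finally, the visited-node check on line 19 does not block $P^*_{i+1}$ because $u_{i+1}$ is distinct from $u_0,\ldots,u_i$ in the elementary $P^*$, so $P^*_{i+1}$ is pushed onto $S$.

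The main conceptual obstacle is articulating why the two retained prunes really do suffice: they are both bounded from below purely by quantities derivable from the \emph{suffix} of $P^*$ itself, which is exactly what guarantees they never discard $P^*$'s prefixes without first driving $\text{tmp\_min}$ low enough; any cross-branch comparison such as the original dominance check would invalidate this argument, as the examples in Figure~\ref{fig:dominance_check_example} illustrate. A secondary, more mechanical obstacle is making the ``eventually popped'' step of the induction airtight, which reduces to noting that Pulse$+$ only ever pushes elementary paths onto $S$ and hence explores a finite search space, so termination is guaranteed and every pushed partial path is popped before $S$ empties.
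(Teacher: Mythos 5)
Your proposal is correct and rests on the same key observation as the paper's proof: both retained pruning conditions are bounded by the delay and cost of the optimal path's own suffix, so the feasibility prune can never fire on a prefix of $P^*$ and the optimality prune can only fire once $\text{tmp\_min}\le c(P^*)$. The paper packages this as a short contradiction argument while you formalize it as an induction with a disjunctive invariant, but the substance is the same.
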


\begin{proof}
See Appendix \ref{appendix:proof_pulse}.
\end{proof}

\noindent\textbf{Remark:} Although Pulse$+$ guarantees optimality, its worst-case running time is not polynomial. Despite of that, thanks to the high pruning efficiency, Pulse$+$ attains much higher efficiency than other approaches including KSP and Lagrangian-dual approaches, which makes it possible to support DetNet routing in large networks with thousands of nodes and links.

\section{Srlg-Disjoint DRCR Algorithm}
We propose CoSE-Pulse$+$, to solve the Srlg-disjoint DRCR problem in this section. As discussed in Section \ref{sec:understand_srlg_drcr}, the key to CoSE-Pulse$+$ is the design of a conflict-set finding algorithm subject to delay constraints, which is described first below.

% We first develop Conflict-Pulse$+$, which finds a conflict set for an active path $P_a$ as long as $P_a$ does not have an Srlg-disjoint backup path. We then propose a divide-and-conquer algorithm

% The existing literature has studied how to find Srlg-disjoint path pairs, but none of them accounts for the delay-range constraints. Nevertheless, the concept of \emph{conflict (Srlg) set} proposed in ~\cite{xu2004on, rostami2007cose} is useful in our setting, as well.

\begin{algorithm}[t]
\SetAlgoLined
\KwData{A network $G(V,E)$, a source-destination pair $(s,t)$, a delay upper bound $U$ and a path $P_a$.}

\KwResult{A conflict Srlg set $T$.}

%\tcp{Define global variables.} 
Initialize the conflict Srlg set $T=\emptyset$.

% \tcp{Sort egress links to accelerate Pulse$+$.}
% \For{every node $v\in V$}{
%     Sort all the egress links of $v$ from lowest to highest based on the weight $w(e)=d(e)+d(P_{\text{To}(e)\rightarrow t}^{\text{min\_delay}})$.
% }

\tcp{Perform deep first search.}
Use a stack $S$ to store all the branches to be explored. Initialize $S=\{\text{empty\_path}\}$.

\While{$S$ is not empty} {
    Let path $P_{s\rightarrow u}=S.\text{pop()}$. Let $u$ be the end node of $P_{s\rightarrow u}$. Set $u=s$ if $P_{s\rightarrow u}$ is empty.
    
    \If{there exists a disabled link in $P_{s\rightarrow u}$}{
        \textbf{continue};
    }
    \If{$u == t$} {
        \tcp{Validate the path found.}
        \If{$d(P_{s\rightarrow u})\leq U$} {
            \If{$\Omega(P_{s\rightarrow u})\cap \Omega(P_a)=\emptyset$}{
                \tcp{Fail to find a conflict set.}
                return an empty set;
            }
            \text{Pick an Srlg } $r\in \Omega(P_{s\rightarrow u})\cap \Omega(P_a)$\text{ such} \text{that $r$ contains the largest number of links.}
            
            Disable all the links in the Srlg $r$.
            
            $T.\text{insert}(r);$
        }
        \textbf{continue};
    }

    \tcp{Cut branches when possible.}
    \If{$d(P_{s\rightarrow u})+d(P_{u\rightarrow t}^{\text{min\_delay}})> U$}{
        \textbf{continue};
    }

    \tcp{Add new branches.}
    \For{every egress \textbf{active} link $e$ of the node $u$} {
        \If{the node $\text{To}(e)$ is not visited in $P_{s\rightarrow u}$} {
            $S.\text{push}(P_{s\rightarrow u}\cup \{e\})$;
        }
    }
}

return the conflict set $T$;
\caption{Conflict-Pulse$+$}
\label{algorithm:conflict_pulse_plus}
\end{algorithm}

\subsection{Conflict-Pulse$+$: Find a Conflict Set}\label{sec:conflict_set}

\subsubsection{Why do We Need Conflict Sets?}

\begin{definition}\label{def:conflict_set} (Conflict Set) Given a path $P_a$, its conflict set $T$ is a subset of $\Omega(P_a)$ such that every path $P$ whose Srlg set $\Omega(P)$ contains $T$ cannot find an Srlg-disjoint backup path.  
\end{definition}

% The concept of \emph{conflict (Srlg) set} was proposed to solve the ``trap'' problem encountered in the link/Srlg-disjoint path finding problems. Consider the example in Figure \ref{fig:trap_problem}. In this example, each Srlg only contains one link and thus we can use a link to represent an Srlg. The objective is to find two Srlg-disjoint paths from $A$ to $F$ such that the active path attains the minimum cost. One natural idea is to find a sequence of active paths with end-to-end cost sorted from low to high, and test if it is possible to find an Srlg-disjoint backup path. However, this approach can be extremely inefficient for the example in Figure \ref{fig:trap_problem}. Note that the links $CD$ and $BE$ have very high cost, the low-cost paths from $A$ to $F$ would be of the form $A\rightarrow D\rightarrow E\rightarrow F$ ($D\rightarrow E$ actually consists of multiple links in the low-cost sub-network in Figure \ref{fig:trap_problem}). Unfortunately, none of the paths of the form $A\rightarrow D\rightarrow E\rightarrow F$ can find an Srlg-disjoint backup path. In fact, the Srlg set $\{AD,EF\}$ forms a conflict Srlg set. No path containing $\{AD,EF\}$ could find an Srlg-disjoint backup path. \textbf{Having found a number of Conflict (Srlg) Sets, if we could avoid finding an active path $P_a$ such that $\Omega(P_a)$ contains a conflict set, we could avoid the ``trap'' and accelerate the search of a feasible pair of active and backup paths.}

The concept of \emph{conflict (Srlg) set} was proposed to solve the "trap" problem encountered in the link/Srlg-disjoint path finding problems, especially when the delay diff is small (which is common in DetNet). When trap happens, we get "trapped" in an infeasible solution space and cannot step out without tremendous searching.

\begin{figure}[h]
    \centering
    \includegraphics[scale=0.56]{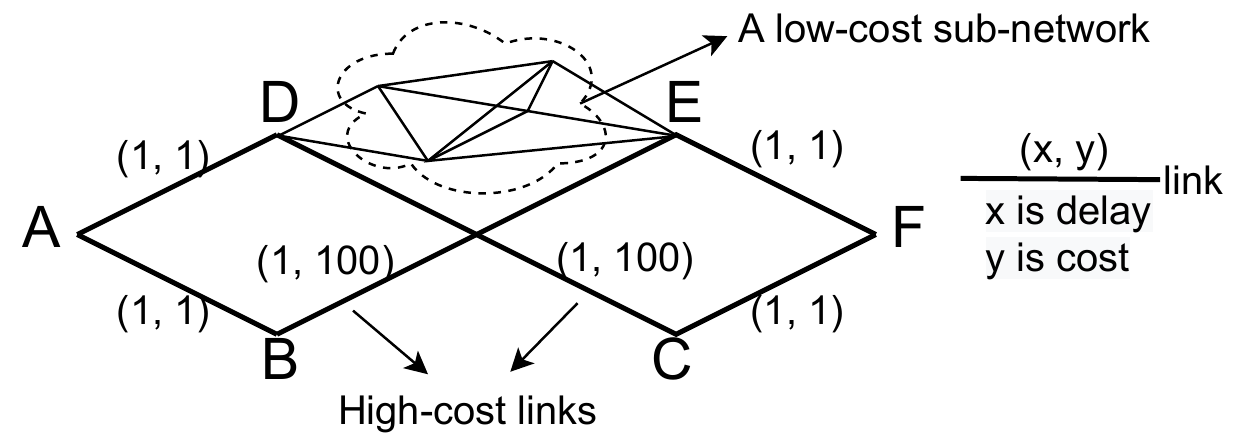}
    \caption{The Trap Problem and the Conflict Srlg Set.}
    \label{fig:trap_problem}
\end{figure}

Figure \ref{fig:trap_problem} shows an example of the trap problem. In this example, each Srlg only contains one link and thus we can use a link to represent an Srlg. The objective is to find two Srlg-disjoint paths from $A$ to $F$ such that the active path attains the minimum cost. One natural idea is to find a sequence of active paths with end-to-end cost sorted from low to high, and test if it is possible to find an Srlg-disjoint backup path. However, this approach can be extremely inefficient for the example in Figure \ref{fig:trap_problem}. Note that the links $CD$ and $BE$ have very high cost, the low-cost paths from $A$ to $F$ would be of the form $A\rightarrow D\rightarrow E\rightarrow F$ ($D\rightarrow E$ actually consists of multiple links in the low-cost sub-network in Figure \ref{fig:trap_problem}). However, none of the paths of the form $A\rightarrow D\rightarrow E\rightarrow F$ can find an Srlg-disjoint backup path. In this example, we are "trapped" in an infeasible solution space and have to do many iterations to step out. 

% The trap problem is common in DetNet as there is always strict limitation on time jitter. As shown in Figure\ref{fig:trap_proportion}, we test how the variation in delay difference affect the frequency of trap problem based on the topology we generated as described in Section \ref{section:evaluation}. The smaller delay difference is, the more likely trap problem occurs. In DetNet, however, the delay difference always need to be sub microsecond so that the PEF can work properly. 

In Figure \ref{fig:trap_problem}, the Srlg set $\{AD,EF\}$ forms a conflict Srlg set. No path containing $\{AD,EF\}$ could find an Srlg-disjoint backup path. \textbf{Having found a number of Conflict (Srlg) Sets, if we could avoid finding an active path $P_a$ such that $\Omega(P_a)$ contains a conflict set, we could avoid the ``trap'' and accelerate the search of a feasible pair of active and backup paths.}

\subsubsection{How to Find a Conflict Set for an Active Path $P_a$?}
The problem of finding a conflict set has been studied in \cite{xu2004on, rostami2007cose, xie2018divide}. However, their approaches cannot handle delay constraints. Specifically, if an active path $P_a$ only has one backup path, but this backup path violates the delay constraint, then the existing conflict set finding algorithms in \cite{xu2004on, rostami2007cose, xie2018divide} will fail to find a conflict set, because these algorithms could incorrectly identify a backup path for $P_a$. In this section, we propose Conflict-Pulse$+$ to solve the challenge imposed by the delay constraints (see Algorithm \ref{algorithm:conflict_pulse_plus}). 

\begin{algorithm}[t]
\SetAlgoLined
\KwData{A network $G(V,E)$, a source-destination pair $(s, t)$, a delay upper bound $U$ and a delay diff $\delta$.}

\KwResult{The optimal active path $P_a^{\text{opt}}$ and an Srlg-disjoint backup path $P_b$.}

%\tcp{Basic definitions.}
Introduce a special Srlg $r_e=\{e\}$ for each link $e$.

Define a problem instance as $I=(In, Ex)$, where $I.In$ is the set of Srlgs that must be included, and $I.Ex$ is the set of Srlgs that must be excluded.

Use $\mathcal{T}$ to denote the conflict sets found. Init $\mathcal{T}=\emptyset$.

Define a problem instance queue $Q$. Init $Q=\{(\emptyset, \emptyset)\}$.

Use $\text{tmp\_min}$ and $P_{a}^{\text{opt}}$ to track the best path found. Use $P_{b}^{\text{opt}}$ to track the backup path. Init $\text{tmp\_min} = \infty$.

\While{$Q$ is not empty} {
    Let $I=Q.\text{pop()}$;
    
    Try using the AP-Pulse$+$ algorithm to find a min-cost path $P_a$ from $s$ to $t$ such that $d(P_a)\leq U$, $I.In\subseteq\Omega(P_a)$, $I.Ex\cap\Omega(P_a)=\emptyset$, and $T\subsetneq \Omega(P_a)$ for any $T\in \mathcal{T}$.
    
    \If{$P_a$ is not found \textbf{or} $c(P_a) \geq \text{tmp\_min}$}{
        \textbf{continue};
    }
    
    Try using Pulse$+$ to find an Srlg-disjoint path $P_b$ from $s$ to $t$ such that $\Omega(P_a)\cap\Omega(P_b)=\emptyset$ and $d(P_a)-\delta\leq d(P_b)\leq \min\{U, d(P_a)+\delta\}$.
    
    \If{Pulse$+$ returns a feasible backup path $P_b$}{
        $\text{tmp\_min}=c(P_a), P_{a}^{\text{opt}}=P_a, P_{b}^{\text{opt}}=P_b$;
        
        \textbf{continue};
    }
    
    Use Conflict-Pulse$+$ to find a conflict set $T$ for $P_a$.

    \uIf{$T$ is not empty} {
        Add $T$ to $\mathcal{T}$;

        Let $\{r_1,...,r_N\}$ be the  Srlgs in $T$ but not in $I.In$;
    } \Else {
        Let $\{r_1,...,r_N\}=\{r_e:e\text{ is a link of }P_a\}$;
    }
    \For{$n=1,2,...,N$} {
        Construct a new problem instance $I_n=(I.In\cup \{r_1,r_2,...,r_{n-1}\}, I.Ex\cup \{r_n\})$;

        $Q.\text{push}(I_n)$
    }
}

Return $P_{a}^{\text{opt}}$ and $P_{b}^{\text{opt}}$.
\caption{CoSE-Pulse$+$}
\label{algorithm:cose_pulse}
\end{algorithm}

Given an active path $P_a$, we first use Pulse$+$ to check if there exists an Srlg-disjoint backup path $P_b$ satisfying $d(P_a)-\delta\leq d(P_b)\leq \min\{U, d(P_a)+\delta\}$. If not, we will run Conflict-Pulse$+$ to find a conflict set for $P_a$. In Conflict-Pulse$+$, the conflict Srlg set $T$ is initialized as an empty set (see line 1). When Conflict-Pulse$+$ finds a backup path $P_{s\rightarrow t}$ satisfying $d(P_{s\rightarrow t})\leq U$, it checks if this path $P_{s\rightarrow t}$ is an Srlg-disjoint path of $P_a$. If it is, then Conflict-Pulse$+$ fails to find a conflict set (see lines 10-12); otherwise, Conflict-Pulse$+$ picks an Srlg $r\in \Omega(P_{s\rightarrow t})\cap \Omega(P_a)$, disable all the links in $r$ and insert $r$ to $T$ (see lines 13-15). Note that, when Conflict-Pulse$+$ generates new searching branches, only \textbf{active} egress links are explored (see lines 22-26). When Conflict-Pulse$+$ encounters a branch with disabled links, it will directly cut this branch (see lines 5-7). If Conflict-Pulse$+$ can reach line 28, then the resulting Srlg set $T$ is a conflict Srlg set. This is guaranteed by the following theorem.

\begin{theorem}\label{thm:conflict_set}
Given an active path $P_a$ and any Srlg selection strategy adopted in line 16 of Algorithm \ref{algorithm:conflict_pulse_plus}, if Algorithm \ref{algorithm:conflict_pulse_plus} reaches line 28, the resulting set $T$ must be a conflict set.
\end{theorem}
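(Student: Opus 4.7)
The plan is to verify the two requirements of Definition~\ref{def:conflict_set}: (i) $T\subseteq\Omega(P_a)$, and (ii) every elementary path $P'$ from $s$ to $t$ with $T\subseteq\Omega(P')$ admits no Srlg-disjoint backup path of delay at most $U$. Requirement (i) is immediate: every Srlg $r$ inserted into $T$ at line 17 of Algorithm~\ref{algorithm:conflict_pulse_plus} is drawn from $\Omega(P_{s\rightarrow t})\cap\Omega(P_a)$ at line 14, so $r\in\Omega(P_a)$. Note that this part works for \emph{any} selection rule used in line 16, matching the statement of the theorem.

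The core of the argument is the following key lemma, which I would establish first: when Algorithm~\ref{algorithm:conflict_pulse_plus} reaches line 28, every elementary path $P$ from $s$ to $t$ with $d(P)\leq U$ satisfies $\Omega(P)\cap T\neq\emptyset$. I would prove it by contradiction. Assume some such $P$ has $\Omega(P)\cap T=\emptyset$. Since the algorithm only disables links belonging to Srlgs inserted into $T$, no link of $P$ is ever disabled during the whole run, so the disabled-link check at lines 5--7 never fires on any prefix of $P$. The feasibility check at lines 19--21 does not fire either, because for each prefix $P_{s\rightarrow u}$ of $P$, $d(P_{s\rightarrow u})+d(P_{u\rightarrow t}^{\text{min\_delay}})\leq d(P_{s\rightarrow u})+d(P_{u\rightarrow t})=d(P)\leq U$, where $P_{u\rightarrow t}$ is the corresponding suffix of $P$. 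Since DFS exhausts every elementary path compatible with the active-link set and the feasibility pruning rule, the pop of $P$ at node $t$ must eventually occur. At that point, $d(P)\leq U$, so the algorithm enters lines 10--17 and either returns at line 12 (if $\Omega(P)\cap\Omega(P_a)=\emptyset$, contradicting the hypothesis that line 28 is reached) or inserts some Srlg $r\in\Omega(P)\cap\Omega(P_a)$ into $T$, giving $r\in\Omega(P)\cap T$ and contradicting $\Omega(P)\cap T=\emptyset$.

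With the lemma in hand, requirement (ii) follows quickly. Take any elementary path $P'$ from $s$ to $t$ with $T\subseteq\Omega(P')$, and suppose for contradiction that $P'$ had an Srlg-disjoint backup $P_b$ with $d(P_b)\leq U$. Then $\Omega(P_b)\cap\Omega(P')=\emptyset$, and $T\subseteq\Omega(P')$ forces $\Omega(P_b)\cap T=\emptyset$. But $P_b$ is itself an elementary path from $s$ to $t$ with $d(P_b)\leq U$, so the key lemma yields $\Omega(P_b)\cap T\neq\emptyset$, a contradiction. Hence no Srlg-disjoint backup of $P'$ exists, which is exactly the conflict-set property.

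The main obstacle is the key lemma, and specifically the delicate point that \emph{dynamic} disabling of Srlgs during DFS does not cause the search to miss any elementary path whose Srlgs lie entirely outside $T$. I would emphasize two invariants of Conflict-Pulse$+$ to make this rigorous: disabling is monotone (once a link is disabled it stays disabled) and only touches links whose Srlgs land in $T$. Therefore a path $P$ with $\Omega(P)\cap T=\emptyset$ is unaffected by the entire disabling history, and the only remaining pruning rules on $P$'s prefixes---node revisits and the min-delay feasibility bound---cannot exclude a valid $P$. This yields the required exhaustiveness of the search and closes the argument.
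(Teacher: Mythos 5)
Your proposal is correct and follows essentially the same argument as the paper's proof: both establish $T\subseteq\Omega(P_a)$ from the selection rule in line 14, and both derive the contradiction by tracing the DFS branch of the putative backup path, showing it can be cut neither by the disabled-link check (its Srlgs avoid $T$) nor by the feasibility check ($d\leq U$), so it must reach the final stage and either trigger the empty-set return or force an Srlg of its own into $T$. Your factoring of this into a standalone lemma about all delay-feasible elementary paths disjoint from $T$ is only a cosmetic reorganization of the paper's direct argument on $P_b'$.
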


\begin{proof}
See Appendix \ref{appendix:proof_conflict}.
\end{proof}

\subsection{CoSE-Pulse$+$: Solve Srlg-Disjoint DRCR}\label{sec:cose-pulse+}
Based on the concept of the conflict (Srlg) set, we propose CoSE (\textbf{Co}nflict \textbf{S}rlg \textbf{E}xclusion)-Pulse$+$ to find Srlg-disjoint paths with delay constraints. CoSE-Pulse$+$ adopts a similar divide-and-conquer approach as CoSE~\cite{rostami2007cose}. The key difference is that CoSE uses the shortest path algorithms, e.g., Dijkstra~\cite{dijkstra1959a}, $A^*$~\cite{hart1968a}, etc., to compute active/backup paths, while CoSE-Pulse$+$ uses variants of the Pulse$+$ algorithm to compute active/backup paths that meet the delay constraints and conflict sets to avoid the ``trap'' problem.

CoSE-Pulse$+$ defines a sequence of sub-problem instances $I=(In, Ex)$, where $I.In$ is the set of Srlgs that must be included, and $I.Ex$ is the set of Srlgs that must be excluded (see line 1 in Algorithm \ref{algorithm:cose_pulse}). Then, the original problem is the sub-problem $I=\{\emptyset, \emptyset\}$. Starting from each sub-problem, CoSE-Pulse$+$ first uses \textbf{AP-Pulse$+$} to find an active path $P_a$. If there exists a backup path $P_b$ for $P_a$, CoSE-Pulse$+$ updates the best path pair found so far. Otherwise, CoSE-Pulse$+$ computes a conflict Srlg set $T$ and uses this set to create new problem instances (see lines 17-27 in Algorithm \ref{algorithm:cose_pulse}). More specifically, let $\{r_1,r_2,...,r_N\}$ be the set of Srlgs in $T$ but not in $I.In$. Since $\{r_1,r_2,...,r_N\}\subseteq T\subseteq \Omega(P_a)$ and $I.Ex\cap\Omega(P_a)=\emptyset$, we must have $I.Ex\cap\{r_1,r_2,...,r_N\}=\emptyset$. Then, we can divide the problem instance $I=(In,Ex)$ into $I_1=(I.In,I.Ex\cup\{r_1\})$ and $I_1^{'}=(I.In\cup\{r_1\},I.Ex)$; $I_1^{'}$ can be further divided into $I_2=(I.In\cup\{r_1\},I.Ex\cup\{r_2\})$ and $I_2^{'}=(I.In\cup\{r_1,r_2\},I.Ex)$; $I_2^{'}$ can be further divided into $I_3=(I.In\cup\{r_1,r_2\},I.Ex\cup\{r_3\})$ and $I_3^{'}=(I.In\cup\{r_1,r_2,r_3\},I.Ex)$; and so on. Note that $I_N^{'}=(I.In\cup\{r_1,r_2,...,r_N\},I.Ex)$ is an infeasible instance, because the conflict set $T\subseteq I_N^{'}.In$. Hence, we obtain a total of $N$ sub-instances $I_1,I_2,...,I_N$ for the problem instance $I$. Note that there is a corner case where Conflict-Pulse$+$ fails to compute a conflict set. In this case, we simply use a trivial conflict set, which contains all the links of $P_a$ (see line 22). After exploring all the problem instances in $Q$, CoSE-Pulse$+$ either finds an optimal Srlg-disjoint path pair, or concludes that such an Srlg-disjoint path pair does not exist.

\subsubsection{Optimality Guarantee of CoSE-Pulse$+$}
\begin{theorem}\label{thm:optimality_cose_pulse}
For any Srlg-Disjoint DRCR problem instance, if CoSE-Pulse$+$ returns a solution, this solution must be optimal.
\end{theorem}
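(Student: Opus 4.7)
The plan is to prove optimality by establishing a loop invariant on the queue $Q$ of pending sub-instances. Concretely, I would claim that at the start of each iteration of the outer \textbf{while} loop, every Srlg-disjoint pair $(P_a^\star,P_b^\star)$ feasible for the original problem with $c(P_a^\star)<\text{tmp\_min}$ has its active path \emph{compatible} with some $I=(In,Ex)\in Q$, meaning $In\subseteq \Omega(P_a^\star)$, $Ex\cap \Omega(P_a^\star)=\emptyset$, and no recorded conflict set $T\in\mathcal{T}$ satisfies $T\subseteq \Omega(P_a^\star)$. Optimality then follows from termination: once $Q$ is empty, no candidate pair can beat $P_a^{\text{opt}}$, so $P_a^{\text{opt}}$ is optimal whenever CoSE-Pulse$+$ returns a solution.

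First I would record correctness of the three subroutines invoked on each iteration. AP-Pulse$+$ is a Pulse$+$ variant augmented with inclusion/exclusion Srlg filters and a no-conflict-superset filter; it returns the min-cost delay-feasible elementary active path meeting those side conditions (or reports infeasibility) by Theorem \ref{thm:optimality_pulse}. The inner Pulse$+$ call decides, for a fixed $P_a$, whether an Srlg-disjoint backup $P_b$ satisfying $d(P_a)-\delta\leq d(P_b)\leq \min\{U,d(P_a)+\delta\}$ exists, again by Theorem \ref{thm:optimality_pulse}. Whenever Conflict-Pulse$+$ returns a non-empty set $T$, Theorem \ref{thm:conflict_set} guarantees that no Srlg-disjoint backup with $d(P_b)\leq U$ exists for any path whose Srlg set contains $T$; since the delay-range constraint on the backup is strictly stronger than $d(P_b)\leq U$, the same $T$ is automatically a conflict set in the operational sense needed by CoSE-Pulse$+$.

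Next I would verify the inductive step of the invariant. The base case is immediate because $Q=\{(\emptyset,\emptyset)\}$ is compatible with every elementary path. For the step, suppose instance $I$ is popped and $\{r_1,\ldots,r_N\}$ are the Srlgs of the returned conflict set $T$ not already in $I.In$. The child instances $I_1,\ldots,I_N$ pushed into $Q$ partition the paths previously compatible with $I$ into slices indexed by which $r_n$ is the first Srlg excluded from $\Omega(P_a^\star)$, together with a single discarded slice $I_N'=(I.In\cup\{r_1,\ldots,r_N\},I.Ex)$ in which every $r_n$ is included. Any elementary active path compatible with $I_N'$ contains $T$ in its Srlg set and therefore has no delay-range-feasible Srlg-disjoint backup; discarding $I_N'$ loses no pair relevant to the invariant. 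Monotone updates to $\text{tmp\_min}$ only shrink the set of still-improving pairs, so the invariant is preserved through feasibility updates as well.

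The main obstacle will be handling the corner case in which Conflict-Pulse$+$ returns an empty set, i.e., an Srlg-disjoint backup exists under $d(P_b)\leq U$ but none satisfies the full delay-range constraint, so CoSE-Pulse$+$ substitutes the trivial set $\{r_e:e\text{ is a link of }P_a\}$ of singleton Srlgs. The crux is to argue that the discarded instance $I_N'$ is still safe to drop here: any elementary $s$--$t$ path $P^\star$ compatible with $I_N'$ satisfies $r_e\in\Omega(P^\star)$ for every link $e$ of $P_a$, which forces $e\in P^\star$; elementarity of both $P^\star$ and $P_a$ then forces $P^\star=P_a$, and $P_a$ has already been checked to admit no delay-range-feasible Srlg-disjoint backup on the current iteration. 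Combining this corner-case argument with the invariant delivers the optimality claim.
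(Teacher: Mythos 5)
Your proof is correct and rests on the same core argument as the paper's: the subdivision of a popped instance $I$ into $I_1,\dots,I_N$ covers every active path compatible with $I$ except those whose Srlg set contains the conflict set, and such paths are infeasible by Theorem~\ref{thm:conflict_set} (plus the observation that the delay-range constraint is stronger than the $d(P_b)\leq U$ check used inside Conflict-Pulse$+$). The only organizational difference is that you carry this covering property forward as a loop invariant on $Q$, whereas the paper argues by contradiction, selecting an explored instance compatible with a hypothetical better path whose $In$-set is maximal and showing the subdivision would have produced a compatible instance with strictly larger $In$-set. Your explicit treatment of the empty-conflict-set corner case --- where the trivial singleton-Srlg set forces any path compatible with the discarded instance $I_N'$ to equal $P_a$ by elementarity --- is a detail the paper's proof passes over quickly but which is genuinely needed for the covering claim to hold in that branch.
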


\begin{proof}
See Appendix \ref{appendix:proof_cose_pulse}.
\end{proof}

\section{Evaluation}\label{section:evaluation}
%For condition 1, there isn't any possible $P_{s\rightarrow t}$ that meets the constraints. For condition 2 and condition 3, $d(P_{min\_delay})$ is in range $[L, U]$, which means the delay lower bound can be ignored and the problem degenerates into the DCR problem, which can be directly solved by the Pulse~\cite{lozano2013on} algorithm. For condition 5, $d(P_{min\_cost}$ is in range $[L, U]$ and we don't need to do any calculation, $P_{min\_cost}$ is the optimal solution. 

%So for the DRCR problem, what we actually need to solve is condition 4 and condition 6, which takes pretty long to get optimal solution by traditional methods such as ksp. We take these conditions respectively but no together because despite all belong to DRCR problem, there is some significant difference between two conditions, which will be discussed later. In the following evaluation, we use DRCR\_PROBLEM\_1 to represent for condition 4 and DRCR\_PROBLEM\_2 to represent for condition 6.
%\subsection{Generate topology}

\subsection{Generate Test Problem Instances}\label{sec:generate_test_cases}
\subsubsection{DRCR Cases:}\label{sec:drcr_cases} We focus on the two non-trivial cases (see Section \ref{sec:solve_drcr}) when generating DRCR test cases.

%Surprisingly, we haven't found any DetNet topology available whatever in literature or on the internet, and there isn't any description about how DetNet topology is organized. 

\noindent\textbf{Generate Topologies:} We do not find any open source data for DetNet topologies. Instead, we use the topologies in Topology Zoo~\cite{TheInter74}, an ongoing project to collect data network topologies from all over the world. Up to now, Topology Zoo contains hundreds of different topologies, and we pick 7 largest topologies, Cogentco, GtsCe, Interoute, Kdl, Pern, TataNld and VtlWavenet2008, for evaluation. However, even these largest topologies only contain hundreds of nodes and links, which are too small to represent the real Internet.

% with fixed nodes and connection probability between two nodes 
In order to test the performance of our algorithm in large scale networks, we use ER random graph model $G(V,p)$ to generate topologies. The parameter $V$ in the model represents the number of nodes in the generated graph, and $p$ represents the probability of generating edges between two random nodes. We generate random graphs with different number of nodes and different edge connection probabilities. To characterize the influence of topology size on algorithm performance, we generate different node scales: 1) $|V|=1000$; 2) $|V|=2000$; 3) $|V| = 4000$; 4) $|V|=6000$; 5) $|V|=8000$; 6) $|V| = 10000$.  For each node scale, we use different edge connection probabilities: 1) $|p|=\ln{|V|}/|V|$; 2) $|p|=2\ln{|V|}/|V|$ 3) $|p|=3\ln{|V|}/|V|$. We choose $\ln{|V|}/|V|$ based on the conclusion that when the connection probabilities in an ER random graph is greater than $\ln{|V|}/|V|$, this graph is connected with probability 1. In order to avoid the influence of randomness on the experimental results, we generate 10 topologies for any given values of $|V|$ and $p$.

\noindent\textbf{Generate Source-destination Pairs and Delay Ranges:} For each topology, we randomly select a number of connected source-destination pairs $(s, t)$. For each pair $(s, t)$, we use Dijkstra algorithm to compute the min-delay path $P_{s\rightarrow t}^{\text{min\_delay}}$ and the min-cost path $P_{s\rightarrow t}^{\text{min\_cost}}$, and then assign different delay ranges $[L,U]$ randomly to form problem instances that belong to either of the two non-trivial cases. To meet DetNet's routing requirement, we set a upper bound on $U-L$, which is 20 in this paper.

% At the same time, we filter the generated delay range to prevent it from being too large and causing too simple problems.
%We generate two problem sets for the DRCR problem. The DRCR problem instances are generated based on $44$ different network topologies. The problem set DRCR\_1 contains a total number of $1953$ problem instances generated for the first non-trivial case and the problem set DRCR\_2 contains a total number of $8907$ problem instances generated for the second non-trivial case.
\subsubsection{Srlg-Disjoint DRCR Cases:} \label{section:trap_proportion} We generate both trap cases and non-trap cases below.
% We generate two SRLG\_DRCR problem sets for the Srlg-disjoint problem. The detailed process to generate these test problem instances are as follows:

\noindent\textbf{Generate Topologies}: We use the topologies in DRCR for the Srlg-disjoint DRCR problem and add Srlgs to the links. We add Srlgs in two styles: the star style and the non-star style~\cite{xie2018divide}. The star style is generally applied in optical networks while the non-star style can be used in other forms of network, such as the overlay network. We adopt different strategies to generate 
Srlgs of the two forms. For the star style, we randomly select the egress links of a node to be in a Srlg and the size of a Srlg is randomly determined based on the average degree of the topology. For the non-star style, we randomly select links in all the links to be in a srlg until every link is in at least one Srlg. The size of each Srlg is a random number in a given range, e.g. $[1,40]$ in our implementation.

%The topologies we applied to form srlg test instances are of two different styles:1) $|V| = 100, |E| = 2000$; 2) $|V| = 2000, |E| = 5000$. 
\noindent\textbf{Generate Source-destinations Pairs and Delay Ranges}: For each topology, we randomly select a number of connected source-destination pairs $(s,t)$. For each pair $(s,t)$, we use the Dijkstra algorithm to compute the min-delay path $P_{s\rightarrow t}^{\text{min\_delay}}$, and assign the delay upper bound as $U=2.5d(P_{s\rightarrow t}^{\text{min\_delay}})$. Then we use CoSE-Pulse+ to test whether the test instance has a feasible solution (Other algorithms, such as KSP, may run indefinitely when a problem instance does not have a feasible solution.) and classify the test cases into trap and non-trap scenarios.

\noindent \textbf{Trap problem in Srlg-disjoint DRCR.} We conduct experiments to test the probability of encountering traps in Srlg-Disjoint DRCR problems. As shown in Figure \ref{fig:trap_proportion}, the trap probability increases as the delay diff decreases. Recall from Section \ref{sec:zero-loss} that the active and backup paths in DetNet cannot have a large delay diff; otherwise the PEF may not guarantee deterministic delay in case of network failures. In our evaluation of Srlg-Disjoint DRCR problems, we set the delay diff of each flow as $1ms$. In this case, about $10\%$ of all test cases encounter trap problem. We will evaluate different algorithms for both the trap cases and the non-trap cases.

% We conduct test on the flows that encounter trap problem.

% However, in DetNet, the delay difference always need to be no more than 1 millisecond, or the PEF may be unable to guarantee the correctness of packets. When delay difference is $1ms$, about 25\% of all test cases encounter trap problem. 

\begin{figure}[h]
    \centering
    \includegraphics[scale=0.16]{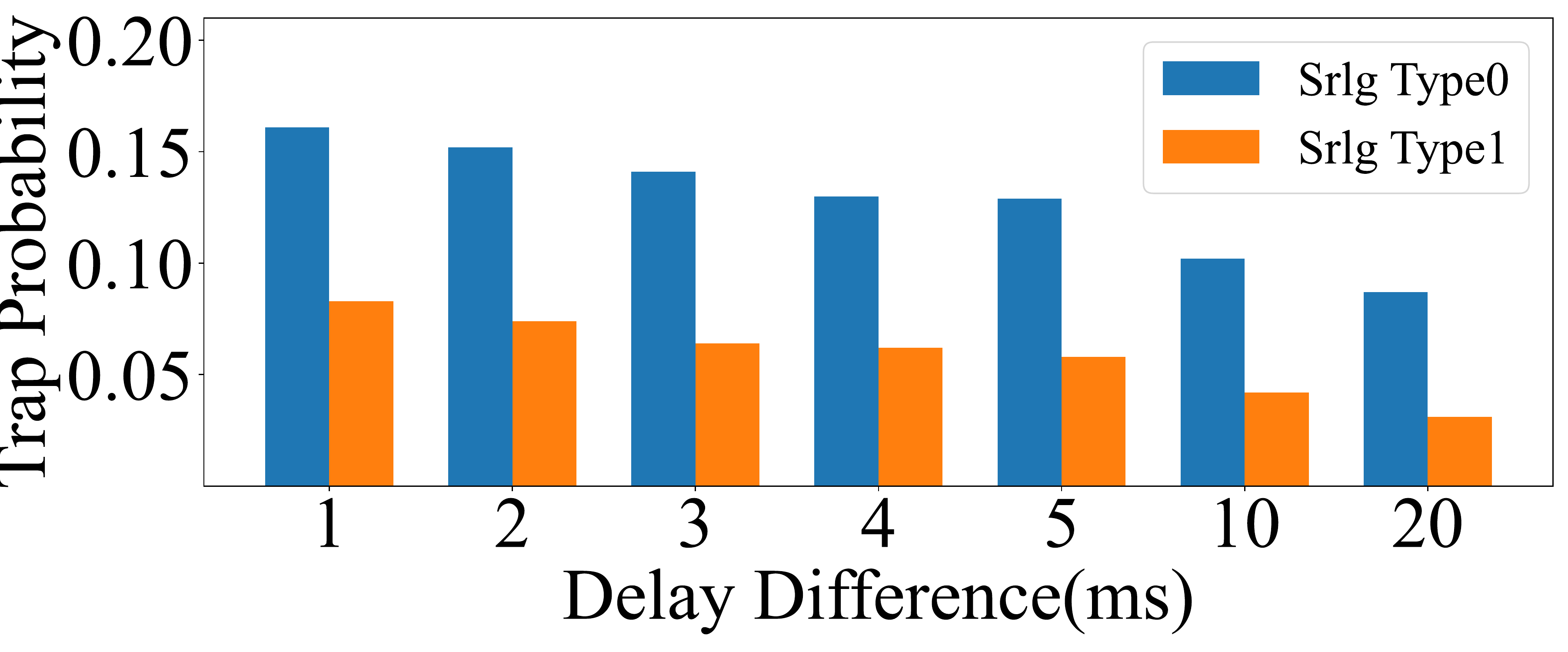}
    \caption{Trap Probability in Srlg-disjoint DRCR.}
    \label{fig:trap_proportion}
\end{figure}

\subsection{Solving DRCR Problems}
We compare Pulse$+$ with another three algorithms designed for the DRCR problem: 1) Cost-based KSP (see Appendix \ref{appendix:ksp}), 2) Lagrangian-Dual based KSP (see Appendix \ref{appendix:lagrangian}) and 3) Delay-based KSP (see Appendix \ref{appendix:ksp}). We have tried our best to optimize the code for all these algorithms to improve efficiency (see our source code in \cite{repository}). We use these these algorithms to solve all the problem instances. All experiments use a single thread of the AMD Ryzen 5600 @3.60GHz CPU on an Ubuntu workstation. Note that different test cases have different running times. Here, we calculate the percentile values and summarize the results in Table \ref{table:drcr_evaluation}. We can see that for the test cases in Topology Zoo, Pulse$+$ runs much faster than other algorithms, with over $10\times$ reduction in running time; for the test cases in random graphs, Pulse$+$ can finish all the test cases within 200 milliseconds, while other algorithms cannot with a time limit of 10 seconds. In Table \ref{tab:completion rate}, we summarize the completion rate versus the network size for all the algorithms. We can see that as the network size increases, the advantage of Pulse$+$ becomes more evident than other algorithms.

% In order to show the advantages of our algorithm in large-scale networks, we compare the completion of Pulse+algorithm with Cost-based KSP, Lagangian-Dual based KSP and Delay-Based KSP. The result is plotted in Figure \ref{fig:completion rate}. The completion rate is equal to Number of completion problems of other algorithms$/$Number of completion problems of Pulse+algorithm. From Figure \ref{fig:completion rate}, we can see that the larger the network scale, the more obvious the advantages of Pulse+algorithm compared with other algorithms.

\begin{table}[t]
\resizebox{\columnwidth}{!}{
\begin{tabular}{llllll}
\toprule 
                                                         & \textbf{Topology}               & \textbf{Pulse$+$}   & \textbf{CostKsp}                  & \textbf{DelayKsp}                 & \textbf{LagrangianKsp}            \\ \midrule
% \multicolumn{1}{l|}{\multirow{7}{*}{
% {\begin{tabular}[c]{@{}c@{}} \textbf{25th} \\ \textbf{Pct.} \end{tabular}}
% }} & \multicolumn{1}{l|}{Zoo}        & 58.0     & 99.0                     & 79.0                     & 104.0                    \\ \cline{2-6} 
% \multicolumn{1}{l|}{}                                    & \multicolumn{1}{l|}{Node:1000}  & 1116.0   & 827.5                    & 36850.0                  & 876.75                   \\ \cline{2-6} 
% \multicolumn{1}{l|}{}                                    & \multicolumn{1}{l|}{Node:2000}  & 1673.0   &  1601.75                  & 626410.0                 & 2278.0                   \\ \cline{2-6} 
% \multicolumn{1}{l|}{}                                    & \multicolumn{1}{l|}{Node:4000}  & 3147.5   &  5706.5                   & 570748.5                 & 6479.5                   \\ \cline{2-6}
% \multicolumn{1}{l|}{}                                    & \multicolumn{1}{l|}{Node:6000}  & 6988.5   &  8217.25                  & \textgreater{}10000000.0 & 10170.0                  \\ \cline{2-6} 
% \multicolumn{1}{l|}{}                                    & \multicolumn{1}{l|}{Node:8000}  & 8334.0   &  9877.0                   & \textgreater{}10000000.0 & 14964.5                  \\ \cline{2-6} 
% \multicolumn{1}{l|}{}                                    & \multicolumn{1}{l|}{Node:10000} & 13474.0  &  14220.5                  & \textgreater{}10000000.0 & 16553.5                  \\ \hline
\multicolumn{1}{l|}{\multirow{7}{*}{{\begin{tabular}[c]{@{}c@{}} \textbf{50th} \\ \textbf{Pct.} \end{tabular}}}} & \multicolumn{1}{l|}{Zoo}        & 70.0    & 229.0                    & 158.0                    & 188.0                    \\ \cline{2-6} 
\multicolumn{1}{l|}{}                                    & \multicolumn{1}{l|}{Node:1000}  & 1575.0   &  4537.5                   & \textgreater{}10000000.0 & 1497.0                   \\ \cline{2-6} 
\multicolumn{1}{l|}{}                                    & \multicolumn{1}{l|}{Node:2000}  & 2499.5   &  7893.0                   & \textgreater{}10000000.0 & 3802.5                   \\ \cline{2-6} 
\multicolumn{1}{l|}{}                                    & \multicolumn{1}{l|}{Node:4000}  & 7902.0   &  56172.0                  & \textgreater{}10000000.0 & 11637.0                  \\ \cline{2-6} 
\multicolumn{1}{l|}{}                                    & \multicolumn{1}{l|}{Node:6000}  & 11834.5  & 34866.0                  & \textgreater{}10000000.0 & 13972.0                  \\ \cline{2-6} 
\multicolumn{1}{l|}{}                                    & \multicolumn{1}{l|}{Node:8000}  & 18409.5  &  26379.0                  & \textgreater{}10000000.0 & 19974.5                  \\ \cline{2-6} 
\multicolumn{1}{l|}{}                                    & \multicolumn{1}{l|}{Node:10000} & 24836.0  &  82157.0                  & \textgreater{}10000000.0 & 21473.0                  \\ \hline
\multicolumn{1}{l|}{\multirow{7}{*}{{\begin{tabular}[c]{@{}c@{}} \textbf{75th} \\ \textbf{Pct.} \end{tabular}}}} & \multicolumn{1}{l|}{Zoo}        & 107.5         & 705.0                    & 397.0                    & 367.5                    \\ \cline{2-6} 
\multicolumn{1}{l|}{}                                    & \multicolumn{1}{l|}{Node:1000}  & 2185.5   &  286447.5                 & \textgreater{}10000000.0 & 3152.0                   \\ \cline{2-6} 
\multicolumn{1}{l|}{}                                    & \multicolumn{1}{l|}{Node:2000}  & 3850.5   &  449127.5                 & \textgreater{}10000000.0 & 7411.75                  \\ \cline{2-6} 
\multicolumn{1}{l|}{}                                    & \multicolumn{1}{l|}{Node:4000}  & 10256.0  & \textgreater{}10000000.0 & \textgreater{}10000000.0 & 18347.5                  \\ \cline{2-6} 
\multicolumn{1}{l|}{}                                    & \multicolumn{1}{l|}{Node:6000}  & 18019.0  &  \textgreater{}10000000.0 & \textgreater{}10000000.0 & 22437.75                 \\ \cline{2-6} 
\multicolumn{1}{l|}{}                                    & \multicolumn{1}{l|}{Node:8000}  & 27006.25 &  2836656.75               & \textgreater{}10000000.0 & 30902.0                  \\ \cline{2-6} 
\multicolumn{1}{l|}{}                                    & \multicolumn{1}{l|}{Node:10000} & 37649.5  &  \textgreater{}10000000.0 & \textgreater{}10000000.0 & 32492.0                  \\ \hline
\multicolumn{1}{l|}{\multirow{7}{*}{{\begin{tabular}[c]{@{}c@{}} \textbf{99th} \\ \textbf{Pct.} \end{tabular}}}} & \multicolumn{1}{l|}{Zoo}        & 1460.0       & \textgreater{}10000000.0 & 77430.0                  & 25054.0                  \\ \cline{2-6} 
\multicolumn{1}{l|}{}                                    & \multicolumn{1}{l|}{Node:1000}  & 15208.0  &  \textgreater{}10000000.0 & \textgreater{}10000000.0 & \textgreater{}10000000.0 \\ \cline{2-6} 
\multicolumn{1}{l|}{}                                    & \multicolumn{1}{l|}{Node:2000}  & 12341.0  &  \textgreater{}10000000.0 & \textgreater{}10000000.0 & \textgreater{}10000000.0 \\ \cline{2-6} 
\multicolumn{1}{l|}{}                                    & \multicolumn{1}{l|}{Node:4000}  & 27544.0  &  \textgreater{}10000000.0 & \textgreater{}10000000.0 & \textgreater{}10000000.0 \\ \cline{2-6} 
\multicolumn{1}{l|}{}                                    & \multicolumn{1}{l|}{Node:6000}  & 51690.0  &  \textgreater{}10000000.0 & \textgreater{}10000000.0 & \textgreater{}10000000.0 \\ \cline{2-6} 
\multicolumn{1}{l|}{}                                    & \multicolumn{1}{l|}{Node:8000}  & 66579.0  &  \textgreater{}10000000.0 & \textgreater{}10000000.0 & \textgreater{}10000000.0 \\ \cline{2-6} 
\multicolumn{1}{l|}{}                                    & \multicolumn{1}{l|}{Node:10000} & 141648.0 &  \textgreater{}10000000.0 & \textgreater{}10000000.0 & \textgreater{}10000000.0 \\ \bottomrule 
\end{tabular}
}
\caption{Solver Running Time for DRCR problems $(\mu s)$}
\label{table:drcr_evaluation}
\end{table}

\begin{table}[t]
\resizebox{\columnwidth}{!}{
\begin{tabular}{|l|l|l|l|l|l|l|}
\hline
    & \textbf{1000}   & \textbf{2000}   & \textbf{4000}   & \textbf{6000}   & \textbf{8000}   & \textbf{10000}  \\ \hline
\textbf{Pulse$+$}        & 1.0000 & 1.0000 & 1.0000 & 1.0000 & 1.0000 & 1.0000 \\ \hline
\textbf{CostKsp}       & 0.8204 & 0.8264 & 0.7244 & 0.7403 & 0.7689 & 0.6736 \\ \hline
\textbf{LagrangianKsp} & 0.9597 & 0.9693 & 0.9548 & 0.9572 & 0.9520 & 0.9496 \\ \hline
\textbf{DelayKsp}      & 0.3720 & 0.2803 & 0.2958 & 0.1975 & 0.1481 & 0.1491 \\ \hline
\end{tabular}
}
 \caption{The completion rate for DRCR problem. The time limit of each test case is 10 seconds.}
\label{tab:completion rate}
\end{table}

\subsection{Solving Srlg-disjoint DRCR Problems}
We compare CoSE-Pulse$+$, Cost-KSP, Lagrangian-KSP (Algorithm \ref{algorithm:srlg_weight_ksp} in Appendix \ref{appendix:lagrangian}) and Delay-KSP. Again, we set a time limit of 10 seconds for each problem instance. The experiment results are summarized in Table \ref{table:srlg_drcr_evaluation} and Table \ref{tab:srlg completion rate}. Cose-Pulse$+$ can finish all the test cases within 10 milliseconds, while the completion rates of all the other algorithms decrease as the network scale increases. Even though Lagrangian-KSP could performs better than Cose-Pulse$+$ for many easy cases (e.g., the non-trap cases), it fails to solve many difficult cases within the time limit. 

\begin{table}[t]
\resizebox{\columnwidth}{!}{
\begin{tabular}{llllll}
\toprule 
                                                         & \textbf{Topology}               & \textbf{Cose-Pulse$+$}   & \textbf{CostKsp}                  & \textbf{DelayKsp}                 & \textbf{LagrangianKsp}            \\ \midrule
% \multicolumn{1}{l|}{\multirow{7}{*}{
% {\begin{tabular}[c]{@{}c@{}} \textbf{25th} \\ \textbf{Pct.} \end{tabular}}
% }} 
% & \multicolumn{1}{l|}{Zoo}        & 19.0     & 88.5                     & 3065.0                     & 106.75                    \\ \cline{2-6} 
% \multicolumn{1}{l|}{}                                  & \multicolumn{1}{l|}{Node:1000}  & 848.0   & 281.25                    & 69390.5                  & 445.0                 \\ \cline{2-6} 
% \multicolumn{1}{l|}{}                                    & \multicolumn{1}{l|}{Node:2000}  & 1909.0   &  860.0                  & 247223.0                 & 1023.0                   \\ \cline{2-6} 
% \multicolumn{1}{l|}{}                                    & \multicolumn{1}{l|}{Node:4000}  & 4373.5   &  2884.75                   & 1136059.5                 & 2609.0                   \\ \cline{2-6}
% \multicolumn{1}{l|}{}                                    & \multicolumn{1}{l|}{Node:6000}  & 7570.25   &  5812.75                  & 5216271.5 & 4895.75                 \\ \cline{2-6} 
% \multicolumn{1}{l|}{}                                    & \multicolumn{1}{l|}{Node:8000}  & 11633.0   &  8775.25                   & 6843523.25 & 7469.5                  \\ \cline{2-6} 
% \multicolumn{1}{l|}{}                                    & \multicolumn{1}{l|}{Node:10000} & 14268.0  &  25727.0                  & \textgreater{}10000000.0 & 9982.5                  \\ \hline
\multicolumn{1}{l|}{\multirow{7}{*}{{\begin{tabular}[c]{@{}c@{}} \textbf{50th} \\ \textbf{Pct.} \end{tabular}}}} 
&\multicolumn{1}{l|}{Zoo}        & 24.0.0    & 211.5                    & 82657.0                   & 249.5                    \\ \cline{2-6} 
\multicolumn{1}{l|}{}                                    & \multicolumn{1}{l|}{Node:1000}  & 1222.0   &  3575.5                   & 814177.0 & 586.5                   \\ \cline{2-6} 
\multicolumn{1}{l|}{}                                    & \multicolumn{1}{l|}{Node:2000}  & 2852.0   &  57085.0                   & 3143077.0 & 1424.0                   \\ \cline{2-6} 
\multicolumn{1}{l|}{}                                    & \multicolumn{1}{l|}{Node:4000}  & 8375.5   &  936955.5                  & \textgreater{}10000000.0 & 4186.5                  \\ \cline{2-6} 
\multicolumn{1}{l|}{}                                    & \multicolumn{1}{l|}{Node:6000}  & 13969.5  &  \textgreater{}10000000.0                  & \textgreater{}10000000.0 & 7824.5                  \\ \cline{2-6} 
\multicolumn{1}{l|}{}                                    & \multicolumn{1}{l|}{Node:8000}  & 23605.5  &  \textgreater{}10000000.0                  & \textgreater{}10000000.0 & 11272.0                  \\ \cline{2-6} 
\multicolumn{1}{l|}{}                                    & \multicolumn{1}{l|}{Node:10000} & 28836.0  &  \textgreater{}10000000.0                 & \textgreater{}10000000.0 & 15445.0                  \\ \hline
\multicolumn{1}{l|}{\multirow{7}{*}{{\begin{tabular}[c]{@{}c@{}} \textbf{75th} \\ \textbf{Pct.} \end{tabular}}}} 
&\multicolumn{1}{l|}{Zoo}        & 30.65         & 800.75                    & 3409553.5                    & 867.25                    \\ \cline{2-6} 
\multicolumn{1}{l|}{}                                    & \multicolumn{1}{l|}{Node:1000}  & 1625.25   &  1283119.25                 & \textgreater{}10000000.0 & 785.75                   \\ \cline{2-6} 
\multicolumn{1}{l|}{}                                    & \multicolumn{1}{l|}{Node:2000}  & 3635.0  &   \textgreater{}10000000.0                 & \textgreater{}10000000.0 & 2192.0                  \\ \cline{2-6} 
\multicolumn{1}{l|}{}                                    & \multicolumn{1}{l|}{Node:4000}  & 9253.0  &   \textgreater{}10000000.0 & \textgreater{}10000000.0 & 6314.75                  \\ \cline{2-6} 
\multicolumn{1}{l|}{}                                    & \multicolumn{1}{l|}{Node:6000}  & 16265.0  &  \textgreater{}10000000.0 & \textgreater{}10000000.0 & 5560437.0                \\ \cline{2-6} 
\multicolumn{1}{l|}{}                                    & \multicolumn{1}{l|}{Node:8000}  & 35142.5 &   \textgreater{}10000000.0               & \textgreater{}10000000.0 & \textgreater{}10000000.0                  \\ \cline{2-6} 
\multicolumn{1}{l|}{}                                    & \multicolumn{1}{l|}{Node:10000} & 32527.0  &  \textgreater{}10000000.0 & \textgreater{}10000000.0 & \textgreater{}10000000.0                  \\ \hline
\multicolumn{1}{l|}{\multirow{7}{*}{{\begin{tabular}[c]{@{}c@{}} \textbf{99th} \\ \textbf{Pct.} \end{tabular}}}} 
&\multicolumn{1}{l|}{Zoo}        & 165.0       &6283.0                  &  \textgreater{}10000000.0 &7382.0                  \\ \cline{2-6} 
\multicolumn{1}{l|}{}                                    & \multicolumn{1}{l|}{Node:1000}  & 3608.0  &  \textgreater{}10000000.0 & \textgreater{}10000000.0 & \textgreater{}10000000.0 \\ \cline{2-6} 
\multicolumn{1}{l|}{}                                    & \multicolumn{1}{l|}{Node:2000}  & 8636.0  &  \textgreater{}10000000.0 & \textgreater{}10000000.0 & \textgreater{}10000000.0 \\ \cline{2-6} 
\multicolumn{1}{l|}{}                                    & \multicolumn{1}{l|}{Node:4000}  & 19445.0  &  \textgreater{}10000000.0 & \textgreater{}10000000.0 & \textgreater{}10000000.0 \\ \cline{2-6} 
\multicolumn{1}{l|}{}                                    & \multicolumn{1}{l|}{Node:6000}  & 28511.0  &  \textgreater{}10000000.0 & \textgreater{}10000000.0 & \textgreater{}10000000.0 \\ \cline{2-6} 
\multicolumn{1}{l|}{}                                    & \multicolumn{1}{l|}{Node:8000}  & 54670.0  &  \textgreater{}10000000.0 & \textgreater{}10000000.0 & \textgreater{}10000000.0 \\ \cline{2-6} 
\multicolumn{1}{l|}{}                                    & \multicolumn{1}{l|}{Node:10000} & 64180.0 &  \textgreater{}10000000.0 & \textgreater{}10000000.0 & \textgreater{}10000000.0 \\ \bottomrule 
\end{tabular}
}
\caption{Solver Running Time for Srlg-disjoint DRCR problems $(\mu s)$}
\label{table:srlg_drcr_evaluation}
\end{table}

\begin{table}[t]
\resizebox{\columnwidth}{!}{
\begin{tabular}{|l|l|l|l|l|l|l|}
\hline
    & \textbf{1000}   & \textbf{2000}   & \textbf{4000}   & \textbf{6000}   & \textbf{8000}   & \textbf{10000}  \\ \hline
\textbf{Cose-Pulse$+$}        & 1.0000 & 1.0000 & 1.0000 & 1.0000 & 1.0000 & 1.0000 \\ \hline
\textbf{CostKsp}       & 0.7789 & 0.6406 & 0.5590 & 0.4629 & 0.4092 & 0.3517 \\ \hline
\textbf{LagrangianKsp} & 0.8924 & 0.8438 & 0.8085 & 0.7467 & 0.6673 & 0.6749 \\ \hline
\textbf{DelayKsp}      & 0.7104 & 0.5566 & 0.4023 & 0.2571 & 0.2390 & 0.1920 \\ \hline
\end{tabular}
}
 \caption{The completion rate for Srlg-disjoint DRCR problem. The time limit of each test case is 10 seconds.}
\label{tab:srlg completion rate}
\end{table}

\section{Discussion}
\noindent\textbf{How to schedule transmission cycles for DetNet flows?} After obtaining the routing paths for each DetNet flow, the next step is to schedule the transmission cycles along the paths. We can use a central controller to schedule transmission cycles following the design principles below. 

\noindent a) First, the packets allocated to each cycle at each output interface cannot exceed the maximum number of packets that can be sent in a cycle. Otherwise, contention and deadline miss may happen. This requirement also enforces each DetNet flow to regulate its traffic using certain rate limiting and shaping functions.

\noindent b) Second, the end-to-end delays after cycle assignments along both paths must be close to each other. Given the end-to-end link delay of a path and Equation (\ref{eqn:detnet_requirement}), we can easily obtain a range for the achievable end-to-end delay after cycle assignment. As long as the active path's achievable delay range overlaps with the backup path's achievable delay range, DetNet packets could experience very close end-to-end delays along both paths with proper cycle assignment.

\noindent\textbf{How to improve DetNet flow's admission rate?} As the number of admitted DetNet flows increases, some links may not have sufficient resources to schedule additional DetNet flows. In this case, we could increase the cost of the congested links. Then, CoSE-Pulse$+$ or Pulse$+$ will avoid these links. This simple scheme could achieve better load balance and increase DetNet flow's admission rate.

\noindent\textbf{How to avoid link under-utilization?} DetNet flows reserve transmission cycles to achieve deterministic delay and jitter. Due to the rate fluctuation, some cycles may not have enough DetNet packets to send. In this case, best-effort packets can be transmitted.

\section{Related Work}
% We study DetNet routing in this paper. Unlike the best-effort flows supported by today's Internet, DetNet flows have stringent requirements of delay, jitter and packet loss rate. 

Pulse$+$ and CoSE-Pulse$+$ meet all the routing requirements of DetNet flows in large networks with thousands of nodes and links. To the best of our knowledge, none of the existing solutions could achieve this objective.

Most works on DetNet routing and scheduling did not account for network failures~\cite{Falk2018Exploring, Nayak2018Incremental, Nayak2018Routing, Schweissguth2020ILP-Based, CHANG2021Time-predictable, Krolikowski2021joint}. The RFC standard of DetNet proposed using backup paths to protect against network failures~\cite{rfc8655}. A recent paper~\cite{sharma2022routing} formulated the active/backup path finding problem using integer programming, but the computational complexity is too high.

From the pure algorithm design's point of view, the link/Srlg-disjoint path finding problems have been studied with an objective to minimize 1) the sum cost of both paths~\cite{suurballe1984a, hu2003diverse, gomes2011resilient, bermond2015finding} or 2) the min cost of the two paths~\cite{xu2002an, li2002efficient, xu2004on, rostami2007cose, xie2018divide, vass2022polynomial}. However, none of these works could handle delay constraints.

The DRCR problem studied in this paper arises as a sub-problem of the Srlg-disjoint DRCR problem. Due to the delay diff requirement, a delay lower bound is imposed. Most existing literature on delay constrained routing does not account for the delay lower bound constraints~\cite{handler1980a, santos2007an, dumitrescu2003improved, zhu2012a, thomas2019an, lozano2013on, sedeno2015an, cabrera2020an}. Although the algorithm proposed in~\cite{ribeiro1985a} directly handles delay lower bounds, it cannot guarantee optimality.

% The Min-Sum problem can be solved in polynomial time~\cite{suurballe1984a} for the link-disjoint cases (a special case of Srlg-disjoint), but was proven to be NP-Complete for the Srlg-disjoint cases~\cite{hu2003diverse, bermond2015finding}. In contrast, the Min-Min problem is NP-Complete~\cite{xu2004on} even for the link-disjoint cases. In this paper, we are interested in the Min-Min problem, because the primary path is used for most of the time in a network.

The DRCR problem is similar to another line of research works~\cite{Desrochers1992a, feillet2004an, lozano2015an, costa2019exact}, i.e., the Vehicle Routing Problem with Time Windows (VRPTW). Given a graph $G(V,E)$, each link $e\in E$ is associated with a delay-cost pair $(d(e), c(e))$ and each node $v\in V$ is associated with a time window $[L_v, U_v]$. The objective is to deliver a service from $s$ to $t$, such that the delivery time is in $[L_t, U_t]$. Note that the service in the VRPTW problem is allowed to arrive at a node $v$ earlier than $L_v$ and then wait until $L_v$ to start its next delivery. In contrast, our DRCR problem does not allow early arrival. In DetNet, network switches may not have enough memory to buffer the early-arrival packets.

\section{Conclusion}
% We study two problems that naturally arise in DetNet: the Delay-Range Constrained Routing (DRCR) problem and the Srlg-disjoint DRCR problem. 

DetNet introduces stringent routing requirements to achieve low end-to-end delay, low delay jitter and zero packet loss. We propose Pulse$+$ and CoSE-Pulse$+$ to solve DetNet's routing challenges. Pulse$+$ and CoSE-Pulse$+$ not only have theoretical optimality guarantee, but also exhibit great scalability in empirical tests. Pulse$+$ and CoSE-Pulse$+$ make it possible to achieve fast routing computation in large-scale DetNets with thousands of nodes and links.

% Due to the existence of the lower bound constraints, some existing pruning strategies cannot be used in Pulse$+$. To compensate the pruning efficiency loss, we develop a largest-delay-first searching strategy and a joint delay \& cost pruning strategy. Evaluation results show that Pulse$+$ outperforms the KSP-based and the Lagrangian-Dual based approaches. In addition, Pulse$+$ can be easily integrated into other algorithms to solve other delay-range related problem. For example, we integrate Pulse$+$ into the CoSE algorithm, and the proposed CoSE-Pulse$+$ algorithm could effectively solve the Srlg-disjoint DRCR problem. We expect that the Pulse$+$ and the CoSE-Pulse$+$ algorithms can be deployed in large-scale deterministic networks to enable fast routing computation for flows with stringent delay and delay jitter requirements.

%Due to the existence of the delay lower bound, existing approaches either fail to support the DRCR problem, or incur extremely high computational complexity. Pulse$+$ adopts a branch-and-cut methodology and employs new pruning strategies and optimization techniques to improve its pruning efficiency. We evaluate Pulse$+$ using more than 10000 test cases generated over networks with different scales. Pulse$+$ reduces XXX. We also integrate Pulse$+$ into an existing algorithm to find Srlg-disjoint path pairs that meet the delay-range constraints. The resulting algorithm XXX.

%%
% \begin{verbatim}
\clearpage
\bibliographystyle{ACM-Reference-Format}
\bibliography{ref}
% \end{verbatim}

%%
%% If your work has an appendix, this is the place to put it.
\appendix
\clearpage

\section{Other Approaches for DRCR and Srlg-Disjoint DRCR}\label{appendix:other_approaches}

\subsection{K-Shortest Path (KSP) is Too Slow}\label{appendix:ksp}
We adopt Yen's KSP algorithm~\cite{yen1971finding} here. Yen's KSP algorithm can be applied either to cost or delay. In the cost-based KSP approach (see Algorithm \ref{algorithm:cost_ksp}), the first path that meets the delay range constraint attains the optimal end-to-end cost. In the delay-based KSP approach (see Algorithm \ref{algorithm:delay_ksp}), we need to iterate over all the paths that meet the delay range constraint and pick the one with the lowest end-to-end cost. 

\noindent\textbf{Performance:} We evaluate both KSP algorithms using the DRCR test cases generated in Section \ref{sec:drcr_cases}. The percentile values of the algorithm running times are summarized in Table \ref{table:drcr_evaluation}. We can see that both Ksp algorithms fail to compute solutions for many test cases in a 10-second time limit, and the completion ratio decreases as the network size increases. As a result, KSP algorithms cannot meet the time requirement for route calculation in a large-scale DetNet.

\noindent\textbf{Solving Srlg-disjoint DRCR Problems: } The KSP-based approach can be easily generalized to solve the Srlg-disjoint DRCR problem. We only need to replace the boxed condition (line 3 in Algorithm \ref{algorithm:cost_ksp} and line 1 in Algorithm \ref{algorithm:delay_ksp}) by
$$\boxed{
\begin{aligned}
&L\leq d(P_{s\rightarrow t}^k)\leq U\text{ and There exists an Srlg-disjoint backup} \\
&\text{path for }P_{s\rightarrow t}^k\text{ and this backup path meets the delay-range}\\
&\text{and delay-diff constraints.}
\end{aligned}}$$
However, such algorithms can be extremely inefficient when solving the trap cases.

\subsection{Lagrangian Approach}\label{appendix:lagrangian}
The KSP algorithm can only start searching from the shortest path. When the delay lower bound is large, the KSP algorithm has to perform a large number of iterations before finding a feasible path that meets the delay lower bound, which makes the pure KSP-based approaches inefficient. The Lagrangian Approach can be utilized to reduce the number of iterations in the KSP search. Specifically, the Lagrangian Approach introduces a weighted cost $w_{\lambda}(e)=c(e)+\lambda d(e)$, where $\lambda$ is a real number, and then performs KSP search based on $w_{\lambda}(e)$. With a properly chosen $\lambda$, the shortest path under $w_{\lambda}$ could have an end-to-end delay close to the delay range $[L,U]$. As a result, the number of KSP iterations can be reduced.

\begin{algorithm}[t]
\SetAlgoLined
\KwData{A network $G(V,E)$, a source node $s$, a destination node $t$, and a delay range $[L,U]$.}

\KwResult{The optimal path $P_{s\rightarrow t}^{\text{opt}}$ from $s$ to $t$.}

\For{k=1,2,...} {
    Use Yen's K-Shortest Path algorithm to find the path $P_{s\rightarrow t}^k$ with the $k$-th smallest end-to-end cost;

    \If{$\boxed{L\leq d(P_{s\rightarrow t}^k)\leq U}$}{
        return $P_{s\rightarrow t}^k$ as the optimal solution of $P_{s\rightarrow t}^{\text{opt}}$.
    }
}
\caption{Cost-based K-Shortest Path}
\label{algorithm:cost_ksp}
\end{algorithm}

\begin{algorithm}[t]
\SetAlgoLined
\KwData{A network $G(V,E)$, a source node $s$, a destination node $t$, and a delay range $[L,U]$.}

\KwResult{The optimal path $P_{s\rightarrow t}^{\text{opt}}$ from $s$ to $t$.}

Use Yen's KSP algorithm to find the set $\mathcal{P}$ of all the paths $P_{s\rightarrow t}^k$'s with $\boxed{L\leq d(P_{s\rightarrow t}^k) \leq U}$;

Find the $P_{s\rightarrow t}^k\in\mathcal{P}$ with the smallest $c(P_{s\rightarrow t}^k)$, and return it as the optimal solution of $P_{s\rightarrow t}^{\text{opt}}$.

\caption{Delay-based K-Shortest Path}
\label{algorithm:delay_ksp}
\end{algorithm}

\begin{algorithm}[h]
\SetAlgoLined
\KwData{A network $G(V,E)$, a source node $s$, a destination node $t$, and a delay range $[L,U]$.}

\KwResult{The optimal path $P_{s\rightarrow t}^{\text{opt}}$ from $s$ to $t$.}

Initialize a weight upper bound $w^U=+\infty$.

Use $\text{tmp\_min}$ and $P_{s\rightarrow t}^{\text{opt}}$ to track the best path found. Initialize $\text{tmp\_min} = +\infty$.

\For{k=1,2,...} {
    Use Yen's K-Shortest Path algorithm to find the path $P_{s\rightarrow t}^k$ with the $k$-th smallest end-to-end weight $w_{\lambda}(P_{s\rightarrow t}^k)=\sum_{e\in P_{s\rightarrow t}^k}w_{\lambda}(e)$;

    \If{$P_{s\rightarrow t}^k$ is not found \textbf{or} $w_{\lambda}(P_{s\rightarrow t}^k) \geq w^U$}{
        return $P_{s\rightarrow t}^{\text{opt}}$ as the optimal solution.
    }

    \If{$\boxed{L\leq d(P_{s\rightarrow t}^k)\leq U}$}{
        \If{$c(P_{s\rightarrow t}^k)<\text{tmp\_min}$}{
            $\text{tmp\_min} = c(P_{s\rightarrow t}^k)$;
            
            $w^U= c(P_{s\rightarrow t}^k) + \max\{\lambda * L, \lambda * U\}$;
        }
    }
}
\caption{Lagrangian-based K-Shortest Path}
\label{algorithm:weight_ksp}
\end{algorithm}

\subsubsection{Finding Optimal Solution based on $w_{\lambda}$}
We first study how to perform KSP search based on $w_{\lambda}$ to obtain the optimal solution of the DRCR problem. The detailed algorithm is shown in Algorithm \ref{algorithm:weight_ksp}. The following lemma states that Algorithm \ref{algorithm:weight_ksp} gives the optimal solution.

\begin{lemma}\label{lemma:weight_ksp}
For any $\lambda$, Algorithm \ref{algorithm:weight_ksp} returns the optimal path solution to the DRCR problem.
\end{lemma}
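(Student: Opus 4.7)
The plan is to show that when Algorithm \ref{algorithm:weight_ksp} terminates, the tracked path $P_{s\rightarrow t}^{\text{opt}}$ is an optimal solution to the DRCR problem, regardless of the sign of $\lambda$. The whole argument turns on the fact that Yen's KSP enumerates paths in non-decreasing order of the weight $w_\lambda$, so the termination condition must rule out every unexamined path that could possibly beat the current $\text{tmp\_min}$.

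First I would establish the key inequality: for any path $P$ feasible for the DRCR problem, i.e.\ $L\leq d(P)\leq U$, we have
\begin{equation*}
w_\lambda(P)=c(P)+\lambda d(P)\leq c(P)+\max\{\lambda L,\lambda U\}.
\end{equation*}
This holds by case analysis on the sign of $\lambda$: if $\lambda\geq 0$ the maximum is $\lambda U$ and $\lambda d(P)\leq\lambda U$; if $\lambda<0$ the maximum is $\lambda L$ and $\lambda d(P)\leq\lambda L$. Handling the negative-$\lambda$ case is precisely the reason $w^U$ in the algorithm uses $\max\{\lambda L,\lambda U\}$ rather than $\lambda U$, so this inequality is the statement that justifies the exact formula on line~10.

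Next I would argue correctness of termination. The algorithm returns either when Yen's KSP produces no further path (in which case $P_{s\rightarrow t}^{\text{opt}}$ is optimal since every elementary path was enumerated) or when $w_\lambda(P_{s\rightarrow t}^k)\geq w^U$. Assume for contradiction that at this termination point there exists a feasible path $P^{*}$ with $c(P^{*})<\text{tmp\_min}$. Applying the key inequality to $P^{*}$ and using the value $w^U=\text{tmp\_min}+\max\{\lambda L,\lambda U\}$ set on line~10 at the most recent update, we obtain
\begin{equation*}
w_\lambda(P^{*})\leq c(P^{*})+\max\{\lambda L,\lambda U\}<\text{tmp\_min}+\max\{\lambda L,\lambda U\}=w^U\leq w_\lambda(P_{s\rightarrow t}^k).
\end{equation*}
Because Yen's algorithm enumerates elementary paths in non-decreasing order of $w_\lambda$, $P^{*}$ must already have appeared as some $P_{s\rightarrow t}^j$ with $j<k$, and the update rule on lines~8--10 would then have replaced $\text{tmp\_min}$ by $c(P^{*})$, contradicting $c(P^{*})<\text{tmp\_min}$.

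Finally I would observe that the paths tracked into $P_{s\rightarrow t}^{\text{opt}}$ are always DRCR-feasible because the boxed guard on line~7 enforces $L\leq d(P_{s\rightarrow t}^k)\leq U$. Combined with the contradiction argument above, this gives both feasibility and optimality. The main obstacle is not technical depth but making sure the $\max\{\lambda L,\lambda U\}$ bookkeeping is correctly aligned with the sign of $\lambda$; the rest is essentially the standard Lagrangian-KSP termination proof applied to a two-sided delay constraint rather than a one-sided one.
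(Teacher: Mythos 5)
Your proposal is correct and follows essentially the same route as the paper's proof: a contradiction argument built on the key inequality $w_\lambda(P^*)=c(P^*)+\lambda d(P^*)\leq c(P^*)+\max\{\lambda L,\lambda U\}<\text{tmp\_min}+\max\{\lambda L,\lambda U\}=w^U$, combined with the fact that Yen's algorithm enumerates paths in non-decreasing order of $w_\lambda$. You spell out a few steps the paper leaves implicit (the sign case analysis for $\max\{\lambda L,\lambda U\}$ and the exhaustive-enumeration termination case), but the underlying argument is identical.
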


\begin{proof}
We prove by contradiction. Let $P_{s\rightarrow t}^*$ be the path that attains the smallest end-to-end cost and meets the delay range constraint $d(P_{s\rightarrow t}^*)\in [L,U]$. If $c(P_{s\rightarrow t}^*) < c(P_{s\rightarrow t}^{\text{opt}})$, then we must have 
\begin{equation}\label{eqn:contradict}
w_{\lambda}(P_{s\rightarrow t}^*) \geq w^U.
\end{equation}
Otherwise, Algorithm \ref{algorithm:weight_ksp} would be able to find the path $P_{s\rightarrow t}^*$.

On the other hand, since $d(P_{s\rightarrow t}^*)\in [L,U]$ and $P_{s\rightarrow t}^*$ attains the minimum cost, we must have
% \begin{eqnarray}
% w_{\lambda}(P_{s\rightarrow t}^*)&=&c(P_{s\rightarrow t}^*)+\lambda d(P_{s\rightarrow t}^*)\nonumber\\
% &<& c(P_{s\rightarrow t}^{\text{opt}})+\max\{\lambda * L, \lambda * U\}=w^U,\nonumber
% \end{eqnarray}
$$
\begin{aligned}
w_{\lambda}(P_{s\rightarrow t}^*)&=c(P_{s\rightarrow t}^*)+\lambda d(P_{s\rightarrow t}^*)\\
&< c(P_{s\rightarrow t}^{\text{opt}})+\max\{\lambda * L, \lambda * U\}=w^U,
\end{aligned}
$$
which contradicts to the inequality (\ref{eqn:contradict}).
\end{proof}

\noindent\textbf{Remark:} To support Srlg-disjoint DRCR problems, we can replace the boxed condition (see line 8 in Algorithm \ref{algorithm:weight_ksp}) by $$\boxed{
\begin{aligned}
&L\leq d(P_{s\rightarrow t}^k)\leq U\text{ and There exists an Srlg-disjoint backup} \\
&\text{path $P_b$ for }P_{s\rightarrow t}^k\text{ and $P_b$ meets all the delay constraints.}
\end{aligned}
}$$
However, the above approach may perform unnecessary backup path searches. In CostKsp (see Algorithm \ref{algorithm:cost_ksp}), when we find an active path that meets the delay range constraint, we know that this path has the optimal cost. In contrast, for the $w_{\lambda}$-based KSP, the first active path found may not have the optimal cost. Even if this path has an Srlg-disjoint backup path, we cannot conclude that we find the optimal solution. To solve the above problem, we propose an enhanced $w_{\lambda}$-based KSP algorithm for finding the optimal Srlg-disjoint path pair in Algorithm \ref{algorithm:srlg_weight_ksp}. The key idea is to perform backup path search only for the paths with the optimal end-to-end cost (see lines 9-12).

\begin{algorithm}[t]
\SetAlgoLined
\KwData{A network $G(V,E)$, a source-destination pair $(s,t)$, a delay upper bound $U$ and a delay diff $\delta$.}

\KwResult{The optimal active path $P_a^{\text{opt}}$ from $s$ to $t$ and an Srlg-disjoint backup path $P_b$.}

Initialize a weight upper bound $w^U=+\infty$.

Define a min heap $H$ to track the active paths that meet the delay constraint. Initially, $H$ is empty. The active paths in $H$ are ordered by end-to-end cost.

\For{k=1,2,...} {
    Use Yen's K-Shortest Path algorithm to find the path $P_{s\rightarrow t}^k$ with the $k$-th smallest end-to-end weight $w_{\lambda}(P_{s\rightarrow t}^k)=\sum_{e\in P_{s\rightarrow t}^k}w_{\lambda}(e)$;

    \If{$P_{s\rightarrow t}^k$ is not found}{
        break;
    }

    \If{$w_{\lambda}(P_{s\rightarrow t}^k) \geq w^U$}{
        Pop the active path from the top of the min heap $H$, denoted by $P_a$.
        
        \If{$P_a$ has an Srlg-disjoint backup path $P_b$ \textbf{\emph{and}} $d(P_a)-\delta\leq d(P_b)\leq \min\{U, d(P_a)+\delta\}$} {
            return $P_a$ and $P_b$ as the optimal solution.
        }
        
        Let $\text{min\_cost}$ be the end-to-end cost of the path on the top of the min heap $H$. Set $\text{min\_cost}=\infty$ if $H$ is empty.

        Set $w^U= \text{min\_cost} + \lambda * U$;
    }

    \If{$d(P_{s\rightarrow t}^k)\leq U$}{
        Add $P_{s\rightarrow t}^k$ to the heap $H$;
        
        Let $\text{min\_cost}$ be the end-to-end cost of the path on the top of the min heap $H$.

        Set $w^U= \text{min\_cost} + \lambda * U$;
    }
}

\While{$H$ is not empty} {
    Pop the active path from the top of the min heap $H$, denoted by $P_a$.
    
    \If{$P_a$ has an Srlg-disjoint backup path $P_b$ \textbf{\emph{and}} $d(P_a)-\delta\leq d(P_b)\leq \min\{U, d(P_a)+\delta\}$} {
        return $P_a$ and $P_b$ as the optimal solution.
    }
}
\caption{Srlg-Disjoint Lagrangian-KSP}
\label{algorithm:srlg_weight_ksp}
\end{algorithm}

\subsubsection{Choosing an Appropriate $\lambda$}\label{appendix:choose_lambda_lagrangian}
We then discuss how to choose an appropriate $\lambda$ for the Lagrangian approach. We focus on the two non-trivial cases in this section: 1) $d(P_{s\rightarrow t}^{\text{min\_delay}})< L< U< d(P_{s\rightarrow t}^{\text{min\_cost}})$ and 2) $d(P_{s\rightarrow t}^{\text{min\_delay}})< d(P_{s\rightarrow t}^{\text{min\_cost}})< L< U$.

We first introduce some mathematical preliminaries. For any $\lambda$, let $P_{s\rightarrow t}^{\lambda-\text{opt}}$ be the path that attains the smallest weight $w_{\lambda}(P_{s\rightarrow t}^{\lambda-\text{opt}})=\sum_{e\in P_{s\rightarrow t}^{\lambda-\text{opt}}}w_{\lambda}(e)$. (Note that the end-to-end delay of $P_{s\rightarrow t}^{\lambda-\text{opt}}$ may not be within $[L,U]$.) We define 
\begin{equation}\label{eqn:g}
\begin{aligned}
g(\lambda)&=w_{\lambda}(P_{s\rightarrow t}^{\lambda-\text{opt}})-\max\{\lambda L, \lambda U\}\\
&=c(P_{s\rightarrow t}^{\lambda-\text{opt}})+\lambda d(P_{s\rightarrow t}^{\lambda-\text{opt}})-\max\{\lambda L, \lambda U\}.
\end{aligned}
\end{equation}
We can prove the following lemmas for $g(\lambda)$.

% $$\begin{aligned}
% g(\lambda)&=w_{\lambda}(P_{s\rightarrow t}^{\lambda-\text{opt}})-\max\{\lambda L, \lambda U\}\leq w_{\lambda}(P_{s\rightarrow t}^*)-\max\{\lambda L, \lambda U\}\\
% &= c(P_{s\rightarrow t}^*) + \lambda d(P_{s\rightarrow t}^*) - \max\{\lambda L, \lambda U\}\leq c(P_{s\rightarrow t}^*).
% \end{aligned}$$

% \begin{eqnarray}
% g(\lambda)&=&w_{\lambda}(P_{s\rightarrow t}^{\lambda-\text{opt}})-\max\{\lambda L, \lambda U\}\leq w_{\lambda}(P_{s\rightarrow t}^*)-\max\{\lambda L, \lambda U\}\nonumber\\
% &=& c(P_{s\rightarrow t}^*) + \lambda d(P_{s\rightarrow t}^*) - \max\{\lambda L, \lambda U\}\leq c(P_{s\rightarrow t}^*).\nonumber
% \end{eqnarray}

\begin{lemma}
$g(\lambda)$ is a concave function of $\lambda$, i.e., $g(\lambda_1)+g(\lambda_2)\leq 2g((\lambda_1+\lambda_2)/2), \text{ for any } \lambda_1, \lambda_2.$
\end{lemma}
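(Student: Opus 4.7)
The plan is to exhibit $g(\lambda)$ as a sum of two concave functions of $\lambda$, and then invoke the elementary fact that the sum of concave functions is concave. Both summands will turn out to be pointwise minima (or, equivalently, negated pointwise maxima) of finitely many affine functions of $\lambda$, which is the standard source of concavity.

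First, I would rewrite the leading term. By definition of $P_{s\rightarrow t}^{\lambda\text{-opt}}$,
\begin{equation*}
w_\lambda(P_{s\rightarrow t}^{\lambda\text{-opt}}) \;=\; \min_{P} \bigl(c(P) + \lambda\, d(P)\bigr),
\end{equation*}
where the minimum runs over the finite family of elementary $s$--$t$ paths in $G$. For each fixed path $P$, the map $\lambda\mapsto c(P)+\lambda\,d(P)$ is affine in $\lambda$, and the pointwise minimum of a family of affine functions is concave. Hence $\lambda \mapsto w_\lambda(P_{s\rightarrow t}^{\lambda\text{-opt}})$ is concave.

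Next, I would handle the second term. Writing
\begin{equation*}
-\max\{\lambda L,\,\lambda U\} \;=\; \min\{-\lambda L,\,-\lambda U\},
\end{equation*}
we again obtain a pointwise minimum of two affine functions of $\lambda$, which is concave. Adding the two concave terms yields the concavity of $g$. Translated into the midpoint inequality requested in the statement, for any $\lambda_1,\lambda_2$ and $\bar\lambda=(\lambda_1+\lambda_2)/2$,
\begin{equation*}
g(\lambda_1)+g(\lambda_2) \;\le\; 2\,g(\bar\lambda),
\end{equation*}
since the inequality holds separately for each of the two concave summands.

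I do not anticipate a real obstacle here: the only thing to be careful about is the sign in front of the $\max$ (it is the negative max, i.e.\ a min, that is concave), and the fact that the path set is finite so the minimum is actually attained. No assumption on the sign of $\lambda$ is needed, and the argument applies uniformly to the two non-trivial DRCR cases discussed just above.
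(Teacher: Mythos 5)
Your proof is correct and follows essentially the same route as the paper: both decompose $g$ into the optimal-weight term and the $-\max\{\lambda L,\lambda U\}$ term and establish concavity of each. The only difference is presentational — you invoke the standard fact that a pointwise minimum of affine functions of $\lambda$ is concave, whereas the paper verifies the corresponding midpoint inequalities by hand (using that the midpoint-optimal path is a feasible candidate at $\lambda_1$ and $\lambda_2$) — so the two arguments are the same in substance.
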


\begin{proof}
According to $g(\lambda)$'s definition (\ref{eqn:g}), we only need to prove the following two inequalities:
$$\begin{aligned}
&w_{\lambda_1}(P_{s\rightarrow t}^{\lambda_1-\text{opt}})+w_{\lambda_2}(P_{s\rightarrow t}^{\lambda_2-\text{opt}})
\leq 2w_{(\lambda_1+\lambda_2)/2}(P_{s\rightarrow t}^{(\lambda_1+\lambda_2)/2-\text{opt}}),\\
&\max\{\lambda_1 L, \lambda_1 U\}+\max\{\lambda_2 L, \lambda_2 U\}
\geq \max\{(\lambda_1+\lambda_2)L, (\lambda_1+\lambda_2)U\}.\nonumber
\end{aligned}$$

For any path $P$, we must have
$$\begin{aligned}
&2w_{(\lambda_1+\lambda_2)/2}(P)=\sum_{e\in P}\left(2c(e)+(\lambda_1+\lambda_2)d(e)\right)\\
=&\sum_{e\in P}\left(c(e)+\lambda_1 d(e)\right)+\sum_{e\in P}\left(c(e)+\lambda_2 d(e)\right)=w_{\lambda_1}(P) + w_{\lambda_2}(P).
\end{aligned}$$
Then, the first inequality can be proved as follows:
$$\begin{aligned}
&2w_{(\lambda_1+\lambda_2)/2}(P_{s\rightarrow t}^{(\lambda_1+\lambda_2)/2-\text{opt}})\\
=&w_{\lambda_1}(P_{s\rightarrow t}^{(\lambda_1+\lambda_2)/2-\text{opt}}) + w_{\lambda_2}(P_{s\rightarrow t}^{(\lambda_1+\lambda_2)/2-\text{opt}})\\
\geq&w_{\lambda_1}(P_{s\rightarrow t}^{\lambda_1-\text{opt}})+w_{\lambda_2}(P_{s\rightarrow t}^{\lambda_2-\text{opt}}).\nonumber
\end{aligned}$$
The ``$\geq$'' holds because $P_{s\rightarrow t}^{\lambda_i-\text{opt}}$ is the min-weight path under the weight function $w_{\lambda_i}, i=1,2$.

The second inequality is equivalent to the following two inequalities:
$$\lambda_1 L + \lambda_2 L \leq \max\{\lambda_1 L, \lambda_1 U\}+\max\{\lambda_2 L, \lambda_2 U\},$$
$$\lambda_1 U + \lambda_2 U \leq \max\{\lambda_1 L, \lambda_1 U\}+\max\{\lambda_2 L, \lambda_2 U\}.$$
The above two inequalities hold obviously.
\end{proof}

\begin{lemma}
Let $P_{s\rightarrow t}^*$ be the optimal solution of the DRCR problem. For any $\lambda$, we have
$g(\lambda)\leq c(P_{s\rightarrow t}^*).$
\end{lemma}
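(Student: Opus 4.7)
The plan is a direct two-step argument: first apply the optimality of $P_{s\rightarrow t}^{\lambda-\text{opt}}$ under the weighted objective $w_\lambda$, then bound the residual term using the feasibility of $P_{s\rightarrow t}^*$ with respect to the delay range $[L,U]$.

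First, I would observe that since $P_{s\rightarrow t}^{\lambda-\text{opt}}$ attains the minimum value of $w_\lambda(\cdot)$ over all paths from $s$ to $t$, and since $P_{s\rightarrow t}^*$ is in particular such a path, we have
\begin{equation*}
w_\lambda(P_{s\rightarrow t}^{\lambda-\text{opt}}) \leq w_\lambda(P_{s\rightarrow t}^*) = c(P_{s\rightarrow t}^*) + \lambda\, d(P_{s\rightarrow t}^*).
\end{equation*}
Substituting this into the definition of $g(\lambda)$ in equation~(\ref{eqn:g}) immediately yields
\begin{equation*}
g(\lambda) \leq c(P_{s\rightarrow t}^*) + \lambda\, d(P_{s\rightarrow t}^*) - \max\{\lambda L, \lambda U\}.
\end{equation*}

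Second, I would show that $\lambda\, d(P_{s\rightarrow t}^*) \leq \max\{\lambda L, \lambda U\}$ by splitting on the sign of $\lambda$. Since $P_{s\rightarrow t}^*$ is DRCR-feasible, $L \leq d(P_{s\rightarrow t}^*) \leq U$. If $\lambda \geq 0$, then $\max\{\lambda L, \lambda U\} = \lambda U$ and multiplying $d(P_{s\rightarrow t}^*) \leq U$ by $\lambda$ preserves the inequality, giving $\lambda\, d(P_{s\rightarrow t}^*) \leq \lambda U$. If $\lambda < 0$, then $\max\{\lambda L, \lambda U\} = \lambda L$ and multiplying $L \leq d(P_{s\rightarrow t}^*)$ by the negative $\lambda$ flips the direction, again yielding $\lambda\, d(P_{s\rightarrow t}^*) \leq \lambda L$. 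Combining the two cases and plugging back completes the proof.

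There is essentially no hard step here: the main thing to be careful about is the sign analysis on $\lambda$, which is exactly the reason the expression $\max\{\lambda L, \lambda U\}$ (rather than just $\lambda U$, as in the classical DCR Lagrangian) appears in the definition of $g$. This is also what makes the bound tight in the sense that the Lagrangian dual remains a valid lower bound even when the algorithm is forced to use negative multipliers to handle the delay lower bound $L$.
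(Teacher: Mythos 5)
Your proof is correct and matches the paper's argument exactly: the paper's one-line proof uses the same two steps, namely optimality of $P_{s\rightarrow t}^{\lambda-\text{opt}}$ under $w_\lambda$ followed by the bound $\lambda\, d(P_{s\rightarrow t}^*)\leq \max\{\lambda L,\lambda U\}$ from feasibility. Your explicit sign-split on $\lambda$ just makes precise what the paper leaves implicit.
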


\begin{proof}
$g(\lambda)=w_{\lambda}(P_{s\rightarrow t}^{\lambda-\text{opt}})-\max\{\lambda L, \lambda U\}\leq w_{\lambda}(P_{s\rightarrow t}^*)-$
$\max\{\lambda L, \lambda U\} = c(P_{s\rightarrow t}^*) + \lambda d(P_{s\rightarrow t}^*) - \max\{\lambda L, \lambda U\}\leq c(P_{s\rightarrow t}^*).$
\end{proof}

Now we are ready to discuss how to choose a proper $\lambda$. Our goal is to find $\lambda^*$ that maximizes $g(\lambda)$. When $\lambda = \lambda^*$, the min-weight path $P_{s\rightarrow t}^{\lambda^*-\text{opt}}$ would have its end-to-end weight $w_{\lambda}(P_{s\rightarrow t}^{\lambda^*-\text{opt}})$ close to the upper bound $w^U$ in Algorithm \ref{algorithm:weight_ksp} and thus the number of iterations required would be small.

\textbf{Case 1: $d(P_{s\rightarrow t}^{\text{min\_delay}})< L< U< d(P_{s\rightarrow t}^{\text{min\_cost}})$.} We prove that $\lambda^*>0$ in this case. Consider the right derivative $g'(0^+)$ of $g(\lambda)$ at $\lambda = 0$. When $\lambda = 0$, it is easy to see that $P_{s\rightarrow t}^{\text{min\_cost}}$ is the min-weight path. Thus, $g'(0^+)=d(P_{s\rightarrow t}^{\text{min\_cost}})-U>0$. Consider the derivative $g'(\Lambda)$ at a sufficiently large value $\Lambda$. When $\lambda=\Lambda$, the cost term ``$c(e)$'' becomes negligibly small compared to the delay term ``$\Lambda d(e)$'' in $w_{\Lambda}(e)$, and then $P_{s\rightarrow t}^{\text{min\_delay}}$ becomes the min-weight path. Thus, $g'(\Lambda)=d(P_{s\rightarrow t}^{\text{min\_delay}})-U\leq 0.$ Based on the above analysis, we know that in the range $[0,\Lambda]$, $g(\lambda)$ increases at the beginning while decreases in the end. Combined with the fact that $g(\lambda)$ is a concave function, we know that the optimal solution $\lambda^*$ must be within the range $[0,\Lambda]$. Then, we use the gradient descent algorithm to compute the optimal $\lambda^*$.

\textbf{Case 2: $d(P_{s\rightarrow t}^{\text{min\_delay}})< d(P_{s\rightarrow t}^{\text{min\_cost}})< L< U$.} We prove that $\lambda^*<0$ in this case. Consider the left derivative $g'(0^-)$ of $g(\lambda)$ at $\lambda = 0$. When $\lambda = 0$, $P_{s\rightarrow t}^{\text{min\_cost}}$ is the min-weight path. Thus, $g'(0^-)=d(P_{s\rightarrow t}^{\text{min\_cost}})-L<0$. Combined with the fact that $g(\lambda)$ is a concave function, we know that the optimal solution $\lambda^* < 0$. 

In the region $\lambda < 0$, $w_{\lambda}(e)$ may become negative. Note that in the definition of $g(\lambda)$ in (\ref{eqn:g}), calculating $g(\lambda)$ requires finding a shortest path with respect to the link weight $w_{\lambda}(e)$. If we use Dijkstra's algorithm to compute such a shortest path, all the link weights must be non-negative. In order to meet the requirement of the Dijkstra's algorithm, we define $\mu=\min_{e\in E} \{c(e)/d(e)\}$ and only consider those $\lambda$'s satisfying $\lambda \geq -\mu$. Consider the derivative of $g'(-\mu)$ of $g(\lambda)$ at $\lambda = -\mu$. Base on the value of $g'(-\mu)=d(P_{s\rightarrow t}^{-\mu-\text{opt}})-L$, we can further divide Case 2 into two subcases:

\textbf{Subcase 1:} $d(P_{s\rightarrow t}^{-\mu-\text{opt}})>L$.] In this subcase, $g'(-\mu)>0$. Combined with the fact that $g'(0^-)<0$, we know that the optimal $\lambda^* \in (-\mu, 0)$. Then, we can still use the gradient descent algorithm to compute the optimal $\lambda^*$.

\textbf{Subcase 2:} $d(P_{s\rightarrow t}^{-\mu-\text{opt}})\leq L$.] In this subcase, $g'(-\mu)\leq 0$. Thus, the optimal $\lambda^*$ must be in the range $(-\infty, -\mu]$. On the other hand, to make sure $w_{\lambda}(e)$'s are non-negative, we require $\lambda\geq -\mu$. Hence, we just set $\lambda^*=-\mu$ in this subcase.

\noindent\textbf{Remark: } In the most extreme cases where there exists a link $e$ such that $c(e)=0$ and $d(e)>0$, we would have $\mu=0$. Then, if a problem instance falls into the subcase 2, the Lagrangian approach would become completely useless, because we can only set $\lambda^*=0$.

\begin{figure}[h]
    \centering
    \includegraphics[scale=0.51]{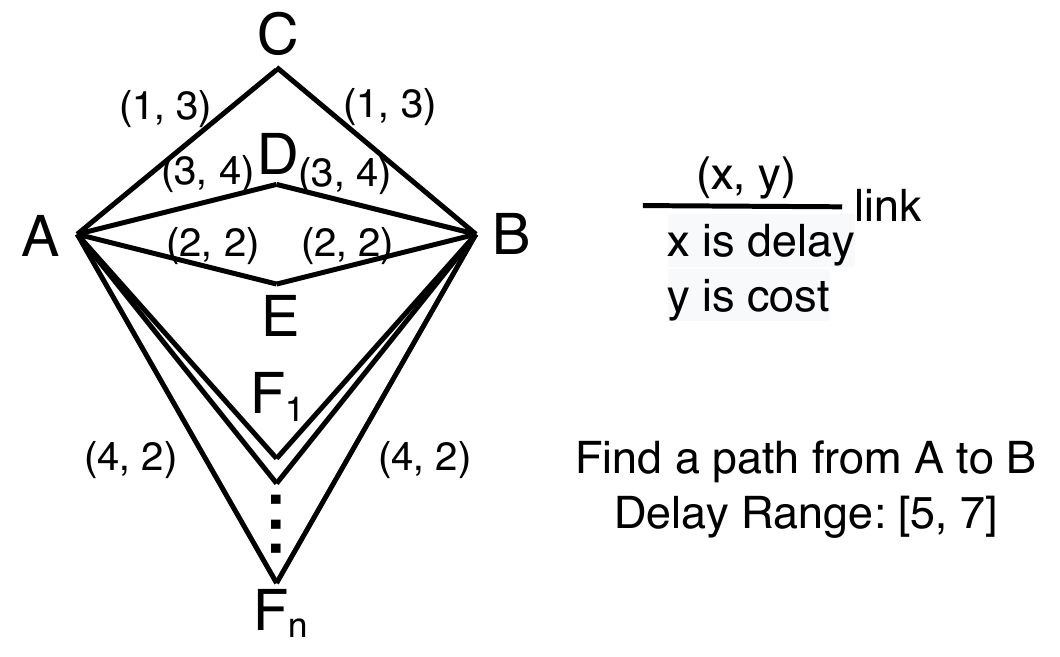}
    \caption{Example: Maximizing $g(\lambda)$ Yields More KSP Iterations. All the $n$ $A\rightarrow F_i\rightarrow B$ paths have the same end-to-end cost of $4$ and end-to-end delay of $8$. }
    \label{fig:lagrangian_not_optimal}
\end{figure}

\noindent\textbf{Remark on the Sub-Optimality of the Lagrangian Approach: } Note that we choose a $\lambda$ that maximizes $g(\lambda)$ in the Lagrangian Approach. This approach could help reduce the number of KSP iterations in many cases. However, there is no guarantee that this choice of $\lambda$ is always optimal. Consider the example in Figure \ref{fig:lagrangian_not_optimal}. The objective is to find a path from $A$ to $B$ such that the end-to-end delay is in the range $[5,7]$. Clearly, the only path that meets the delay range constraint is $A\rightarrow D\rightarrow B$, with an end-to-end delay of $3+3=6$. If we use Delay-KSP (Algorithm \ref{algorithm:delay_ksp}) or use a very large $\lambda$ in the Lagrangian Approach, 3 KSP iterations would be required ($A\rightarrow C\rightarrow B$, $A\rightarrow E\rightarrow B$, and then $A\rightarrow D\rightarrow B$). On the other hand, if we choose the $\lambda$ that maximizes $g(\lambda)$, we can write down the formula of $g(\lambda)$ as follows
\begin{equation}\label{eqn:cost_function_example}
g(\lambda)=\left\{
\begin{aligned}
&4+8\lambda-5\lambda=4+3\lambda, & \text{ if }\lambda\leq 0, \\
&4+4\lambda-7\lambda=4-3\lambda, & \text{ if }0<\lambda\leq 1, \\
&6+2\lambda-7\lambda=6-5\lambda, & \text{ if }\lambda>1,
\end{aligned}
\right.
\end{equation}
and find that $\lambda=0$ maximizes $g(\lambda)$. However, if we use $\lambda=0$ in the Lagrangian Approach, $n+3$ KSP iterations would be required ($A\rightarrow E\rightarrow B$ or $A\rightarrow F_i\rightarrow B$, then $A\rightarrow C\rightarrow B$, and finally $A\rightarrow D\rightarrow B$). Clearly, using $\lambda=0$ yields much more KSP iterations than using a very large $\lambda$. This example could explain why Lagrangian-KSP is not always better than Cost-KSP and Delay-KSP.

\noindent\textbf{Performance:} From Table \ref{table:drcr_evaluation}, we can see that the Lagrangian-KSP Approach runs much faster than both Cost-KSP and Delay-KSP, and achieves higher completion ratio in a 10-second time limit. However, the Lagrangian-KSP Approach is not always effective. In certain cases, the Lagrangian-KSP Approach still cannot compute a solution within the time limit. The reason is that, some DRCR problem instances may fall into the subcase 2. In this subcase, we can only use a suboptimal $\lambda^*=-\mu$ to perform the KSP search. Due to the large gap between $g(\lambda^*)=g(-\mu)$ and the optimal solution $c(P_{s\rightarrow t}^*)$, more KSP iterations are required on average. In addition, as shown in the example in Fig. \ref{fig:lagrangian_not_optimal}, even if the Lagrangian-KSP Approach can choose the optimal $\lambda$, there is no guarantee that the total number of iterations will be reduced.

\section{Joint Pruning for Pulse$+$}\label{appendix:joint_pruning}
We propose joint pruning, another method to accelerate Pulse$+$ search. For every node $u$, we define a cost function $f_u:\mathbb{R}^+\rightarrow\mathbb{R}^+$ that maps from delay budget to path cost, and use this function to prune a branch $P_{s\rightarrow u}$ if the following condition is met:
\begin{equation}\label{eqn:joint_delay_cost_prune}
\boxed{c(P_{s\rightarrow u})+f_u(U-d(P_{s\rightarrow u}))\geq\text{tmp\_min}.}
\end{equation}
Then, we have the following theorem.

\begin{theorem}\label{lem:joint_delay_cost_prune}
Assume that the cost function $f_u(l)$ satisfies the following two conditions for all $$l=U-d(P_{s\rightarrow u})\in (-\infty, U-d(P_{s\rightarrow u}^{\text{min\_delay}})].$$
\begin{enumerate}
    \item $f_u(l)=+\infty$ if there exists no elementary path $P_{u\rightarrow t}$ from $u$ to $t$ satisfying $d(P_{u\rightarrow t})\leq l$;
    \item $f_u(l)\leq c(P_{u\rightarrow t})$ for any elementary path $P_{u\rightarrow t}$ from $u$ to $t$ satisfying $d(P_{u\rightarrow t})\leq l$.
\end{enumerate}
Then, if a branch $P_{s\rightarrow u}$ satisfies the constraint (\ref{eqn:joint_delay_cost_prune}), searching this branch will not yield a solution with end-to-end cost lower than $\text{tmp\_min}$.
\end{theorem}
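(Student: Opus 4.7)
The plan is a direct consequence-checking argument. Fix an arbitrary branch $P_{s\rightarrow u}$ that satisfies the pruning condition (\ref{eqn:joint_delay_cost_prune}), and let $P_{s\rightarrow t}$ be any elementary path that extends this branch, i.e., $P_{s\rightarrow t}=P_{s\rightarrow u}\cup P_{u\rightarrow t}$ for some elementary sub-path $P_{u\rightarrow t}$ from $u$ to $t$ that shares no node with $P_{s\rightarrow u}$ except $u$. I would like to show $c(P_{s\rightarrow t})\geq\text{tmp\_min}$, which then implies that exploring this branch cannot improve over the best path found so far.

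First I would split on whether such an extension that meets the delay upper bound even exists. If $P_{s\rightarrow t}$ is a feasible completion, then the delay upper bound $d(P_{s\rightarrow t})\leq U$ rewrites as $d(P_{u\rightarrow t})\leq U-d(P_{s\rightarrow u})$. Setting $l=U-d(P_{s\rightarrow u})$, note that the assumption $d(P_{s\rightarrow u})\geq d(P_{s\rightarrow u}^{\text{min\_delay}})$ places $l$ in the range $(-\infty, U-d(P_{s\rightarrow u}^{\text{min\_delay}})]$ on which the two postulated properties of $f_u$ are assumed to hold. Since $P_{u\rightarrow t}$ is an elementary path from $u$ to $t$ with $d(P_{u\rightarrow t})\leq l$, property 2 yields $f_u(l)\leq c(P_{u\rightarrow t})$. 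Combining with (\ref{eqn:joint_delay_cost_prune}) then gives
\begin{equation*}
c(P_{s\rightarrow t})=c(P_{s\rightarrow u})+c(P_{u\rightarrow t})\geq c(P_{s\rightarrow u})+f_u(l)\geq\text{tmp\_min},
\end{equation*}
which is exactly what we need.

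It remains to handle the case in which no elementary extension $P_{u\rightarrow t}$ with $d(P_{u\rightarrow t})\leq l$ exists. Then the branch $P_{s\rightarrow u}$ itself yields no feasible solution at all, so it certainly cannot beat $\text{tmp\_min}$. (As a sanity check, property 1 is consistent with this: it forces $f_u(l)=+\infty$ in the stronger case when \emph{no} elementary path from $u$ to $t$ satisfies $d\leq l$, irrespective of node overlap with $P_{s\rightarrow u}$; the pruning condition (\ref{eqn:joint_delay_cost_prune}) is then trivially satisfied.)

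There is no real technical obstacle in this argument; the only subtlety worth flagging is that $f_u$ need only bound the cost of elementary $u\rightarrow t$ paths in the \emph{whole} graph, not merely those that are node-disjoint from $P_{s\rightarrow u}$. This is why condition~2 suffices as a lower bound even when some nodes of $P_{s\rightarrow u}$ are forbidden on the extension: any elementary completion is in particular an elementary $u\rightarrow t$ path in $G$, so property~2 applies. Consequently the pruning is safe, establishing the theorem.
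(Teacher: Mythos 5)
Your proposal is correct and follows essentially the same argument as the paper's proof: split on whether a feasible completion exists, and otherwise chain property 2 of $f_u$ with the pruning condition (\ref{eqn:joint_delay_cost_prune}) to get $c(P_{s\rightarrow u})+c(P_{u\rightarrow t})\geq c(P_{s\rightarrow u})+f_u(U-d(P_{s\rightarrow u}))\geq\text{tmp\_min}$. Your added remarks — that $l$ indeed lies in the assumed range because $d(P_{s\rightarrow u})\geq d(P_{s\rightarrow u}^{\text{min\_delay}})$, and that property 2 over all elementary $u\rightarrow t$ paths suffices even though the completion must avoid nodes of $P_{s\rightarrow u}$ — are correct and slightly more careful than the paper's version, but do not change the approach.
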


\begin{proof}
The proof is straightforward. If $f_u(U-d(P_{s\rightarrow u}))=+\infty$, then (\ref{eqn:joint_delay_cost_prune}) always holds. In this case, there does not exist an elementary path $P_{u\rightarrow t}$ such that $d(P_{s\rightarrow u}) + d(P_{u\rightarrow t})\leq U$. Therefore, the branch $P_{s\rightarrow u}$ can be safely cut.

If $f_u(U-d(P_{s\rightarrow u}))<+\infty$, then for any elementary path $P_{u\rightarrow t}$ satisfying $d(P_{s\rightarrow u})+d(P_{u\rightarrow t})\leq U$, we must have $c(P_{s\rightarrow u})+c(P_{u\rightarrow t})\geq c(P_{s\rightarrow u})+f_u(U-d(P_{s\rightarrow u}))\geq\text{tmp\_min}$. In this case, searching the branch $P_{s\rightarrow u}$ will not yield a solution with end-to-end cost lower than $\text{tmp\_min}$, and thus this branch can be safely cut.
\end{proof}

The strength of the pruning strategy (\ref{eqn:joint_delay_cost_prune}) depends on the cost function $f_u$. Consider the following cost function:
\begin{equation}\label{eqn:cost_function_example}
f_u(l)=\left\{
\begin{aligned}
&+\infty, & \text{ if }l< d(P_{u\rightarrow t}^{\text{min\_delay}}), \\
&c(P_{u\rightarrow t}^{\text{min\_cost}}), & \text{ if }l\geq d(P_{u\rightarrow t}^{\text{min\_delay}}).
\end{aligned}
\right.
\end{equation}
It is easy to verify that the pruning strategy (\ref{eqn:joint_delay_cost_prune}) degenerates to the original pruning strategy (\ref{eqn:prune_strategy}).

In order to obtain better pruning effect, we need a tighter cost function. Obviously, the tightest cost function is \begin{equation}\label{eqn:opt_cost_function}
f_u(l)=\min_{P_{u\rightarrow t}\text{ is a path satisfying } d(P_{u\rightarrow t})\leq l} c(P_{u\rightarrow t}).
\end{equation}
Next, we design a pseudo-polynomial algorithm that computes the above cost function.

% greatly reduces the computational complexity and also enjoys good performance.

\subsection{Calculating the Cost Function (\ref{eqn:opt_cost_function})}\label{subsubsection:calculate_cost_function}

\begin{definition}\label{def:revised_dominance_check}
(Dominance Check) Given two partial paths $P_{u\rightarrow t}^1$ and $P_{u\rightarrow t}^2$ from $u$ to the destination node $t$, $P_{u\rightarrow t}^2$ is dominated by $P_{u\rightarrow t}^1$ if the following two conditions hold: 1) $d(P_{u\rightarrow t}^1) \leq d(P_{u\rightarrow t}^2)$; 2) $c(P_{u\rightarrow t}^1)\leq c(P_{u\rightarrow t}^2)$.
\end{definition}

Now we are ready to present the algorithm that computes the cost function (\ref{eqn:opt_cost_function}). Starting from the destination node $t$, we perform the smallest-cost-first search using a min heap in the reverse direction. We represent each searching branch by a \emph{(node, delay, cost)} tuple, and this tuple means that there exist a path $P$ (which may contain cycles) from \emph{node} to the destination node $t$ such that $d(P)=\text{delay}$ and $c(P)=\text{cost}$. In lines 6-8 of Algorithm \ref{algorithm:relax_cost_function}, we perform the dominance check, and cut a branch if its delay-cost pair is dominated by another delay-cost pair searched before. In lines 14-16 of Algorithm \ref{algorithm:relax_cost_function}, we perform the delay-based feasibility check, and cut a branch if it is not possible to obtain a path from the source node $s$ to the destination node $t$ that meets the delay upper bound. After the smallest-cost-first search, we obtain a list of delay-cost pairs for each node $u\in V$. Then, the relaxed cost function $f_u(l)$ can be calculated according to line 20 in Algorithm \ref{algorithm:relax_cost_function}. The correctness of Algorithm \ref{algorithm:relax_cost_function} is guaranteed by the following theorem.

\begin{algorithm}[t!]
\SetAlgoLined
\KwData{A network $G(V,E)$, a source node $s$, a destination node $t$, and a delay range $[L, U]$.}

\KwResult{The cost function $f_u$ for every $u\in V$.}

%\tcp{Basic definitions.}
Define a $Z=(\text{node, delay, cost})$ tuple. Given $Z_1$ and $Z_2$, define $Z_1 < Z_2$ if and only if $Z_1.\text{cost}<Z_2.\text{cost}$.

Define a min heap $H$ for the node-delay-cost tuple. Initialize $H=\{(t, 0, 0)\}$.

For every node $u\in V$, define a delay-cost pair set $\Pi_u$. Initialize $\Pi_u=\emptyset$.

\While{$H$ is not empty} {
    Let $(u, \text{delay\_to\_dst, cost\_to\_dst})=H.\text{pop()}$;
    
    \If{there exist a delay-cost pair in $\Omega_u$ that dominates $(\text{delay\_to\_dst, cost\_to\_dst})$}{
        \textbf{continue};
    }

    $\Pi_u.\text{insert((delay\_to\_dst, cost\_to\_dst))}$;
    
    \For{every ingress link $e$ of the node $u$}{
        Let $v=\text{From}(e)$;
        
        Let $\text{new\_delay}=\text{delay\_to\_dst}+d(e)$;
        
        Let $\text{new\_cost}=\text{cost\_to\_dst}+c(e)$;
        
        \If{$\text{new\_delay}+d(P_{s\rightarrow v}^{\text{min\_delay}})\leq U$}{
            $H.\text{push}((v, \text{new\_delay}, \text{new\_cost}))$;
        }
    }
}

\For{each node $u$ and each delay value $l$}{
    Compute $f_u(l)=\min\{c: \text{there exists } d\leq l \text{ such that } (d,c)\in \Pi_u\}$;
}

\caption{Calculate $f_u$ According to (\ref{eqn:opt_cost_function})}
\label{algorithm:relax_cost_function}
\end{algorithm}

\begin{theorem}\label{thm:relax_cost_function}
For every node $u$ and every delay value $l\leq U-d(P_{s\rightarrow u}^{\text{min\_delay}})$, the $f_u(l)$ computed by Algorithm \ref{algorithm:relax_cost_function} is equal to the relaxed cost function defined by (\ref{eqn:opt_cost_function}).
\end{theorem}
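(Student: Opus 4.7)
The plan is to establish two complementary properties of the sets $\Pi_u$ produced by Algorithm \ref{algorithm:relax_cost_function}, and then derive the theorem from them. The properties are: (a) \emph{soundness} --- every pair $(d,c)\in\Pi_u$ corresponds to an actual walk from $u$ to $t$ with delay $d$ and cost $c$; and (b) \emph{completeness} --- for every path $P_{u\to t}$ satisfying $d(P_{u\to t})\leq U-d(P_{s\to u}^{\text{min\_delay}})$, some pair $(d',c')\in\Pi_u$ weakly dominates it, i.e.\ $d'\leq d(P_{u\to t})$ and $c'\leq c(P_{u\to t})$. Granted (a) and (b), the formula $f_u(l)=\min\{c:(d,c)\in\Pi_u,\,d\leq l\}$ is sandwiched between the true minimum over paths (via soundness, combined with the fact that non-negative costs let one convert walks to elementary paths without increasing cost) and this same minimum (via completeness applied to the optimal path).

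Soundness will be a routine induction on the order of heap pops: the initial tuple $(t,0,0)$ is the empty walk, and whenever the algorithm pushes $(v,\text{delay}+d(e),\text{cost}+c(e))$ in lines 10--15 it is extending the walk witnessed by the tuple just popped. The substantive work is in completeness. I will prove it by strong induction on the number of edges in $P_{u\to t}$. The base case $u=t$ is handled by the initial heap item, which is popped and inserted into $\Pi_t$ (no prior pair exists to dominate it). For the inductive step, write $P_{u\to t}=[e,e_2,\ldots,e_h]$ with $e=(u,v)$ and consider the tail $P_{v\to t}$. The key inequality I will need to verify is
\[
d(P_{v\to t})\;\leq\;U-d(P_{s\to v}^{\text{min\_delay}}),
\]
which follows from the hypothesis on $P_{u\to t}$ together with the triangle-style bound $d(P_{s\to v}^{\text{min\_delay}})\leq d(P_{s\to u}^{\text{min\_delay}})+d(e)$ (obtained by concatenating $P_{s\to u}^{\text{min\_delay}}$ with $e$ and stripping any resulting cycle, using non-negativity of delays). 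This lets me invoke the induction hypothesis on $P_{v\to t}$ to obtain $(d_v,c_v)\in\Pi_v$ with $d_v\leq d(P_{v\to t})$ and $c_v\leq c(P_{v\to t})$. At the moment $(v,d_v,c_v)$ was added to $\Pi_v$, the algorithm pushed $(u,d_v+d(e),c_v+c(e))$, and the feasibility check at line 14 passes by the displayed inequality. When this tuple is later popped, it is either added to $\Pi_u$ directly or discarded because some pair in $\Pi_u$ already dominates it; in either case, $\Pi_u$ contains a pair weakly dominating $(d(P_{u\to t}),c(P_{u\to t}))$.

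The main obstacle I anticipate is justifying the Dijkstra-style invariant in the presence of the dominance filter. Specifically, I must argue that once a pair $(d,c)$ enters $\Pi_u$, no later extension could generate a strictly better pair $(d,c')$ with $c'<c$ that is ``lost'' because of cost-ordered popping. This will rely on the fact that edge costs are non-negative, so the cost component of every tuple inserted after $(u,d,c)$ is popped is at least $c$; hence a same-delay later tuple at $u$ cannot improve on $c$ and will simply be pruned by dominance rather than overwriting $(d,c)$. A secondary point to confirm is that dominance pruning is harmless for completeness: if a pushed tuple is discarded because of a dominating pair in $\Pi_u$, the dominating pair already witnesses what the induction needed, so the chain of weak-domination inequalities still closes. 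Once both invariants are settled, combining (a) and (b) with non-negativity of costs yields $f_u(l)$ exactly on the required range $l\leq U-d(P_{s\to u}^{\text{min\_delay}})$, completing the proof.
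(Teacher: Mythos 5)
Your plan is correct, and it reaches the theorem by a genuinely different route than the paper. The paper argues by contradiction with an extremal choice: it takes the set of min-cost paths within the delay budget $l$, assumes none has its delay--cost pair recorded in $\Pi_u$, assigns each a ``cut node'' (the first suffix discarded by Algorithm \ref{algorithm:relax_cost_function}), selects the path whose cut node is nearest to $u$, shows that suffix cannot fail the feasibility test, and then splits on whether the dominating label has strictly smaller or equal cost, contradicting either optimality or the extremal choice. Your completeness invariant --- every walk $P_{u\rightarrow t}$ with $d(P_{u\rightarrow t})\leq U-d(P_{s\rightarrow u}^{\text{min\_delay}})$ is weakly dominated by some pair in the final $\Pi_u$ --- is stronger than what the paper establishes (domination of the optimal paths only), and the induction on suffix length is more modular: the two facts doing all the work (the triangle bound $d(P_{s\rightarrow v}^{\text{min\_delay}})\leq d(P_{s\rightarrow u}^{\text{min\_delay}})+d(e)$ that gets the pushed label past the feasibility test, and the observation that a label is discarded only when a dominating witness is already in $\Pi_u$) appear in both proofs, but your framing isolates them cleanly. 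One remark: the ``main obstacle'' you anticipate --- a Dijkstra-style invariant tied to cost-ordered popping --- is a non-issue, as your own secondary point already shows. Whenever the pushed label $(u,\,d_v+d(e),\,c_v+c(e))$ is eventually popped, it is either inserted or dominated by an already-inserted pair, and either outcome supplies the needed witness; the min-heap ordering matters only for efficiency, not for the correctness of the computed $f_u$. When writing this up, do state explicitly that the backward search tracks no visited nodes, so $\Pi_u$ records walks; equality with (\ref{eqn:opt_cost_function}) on the stated range then uses your cycle-stripping remark (non-negative delays and costs) in the soundness direction.
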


% \todo{我在想是不是可以把剪枝条件提到前面的位置，就是说，在介绍完cut node之后，从两个角度分析为什么会被cut掉，第一个角度是按照feasibility check， 第二个角度是dominance check}

% \begin{table*}[t]
% \centering
% \caption{Evaluating joint delay \& cost pruning strategy for Pulse$+$.}
% \label{tab:separate_vs_joint}
% \begin{tabular}{|c|c|c|c|c|}
% \hline
% \multirow{2}{*}{Test Cases} & Pulse$+$ w/o Joint Pruning & \multicolumn{3}{c|}{Pulse$+$ w/ Joint Pruning } \\
%                   & Ave. Run Time   & Ave. Run Time & Ave. Init Time & Ave. Total Time \\
% \hline
% DRCR\_1         & 12.0159 $\mu s$  & 10.5622 $\mu s$  & 2439.19 $\mu s$  & 2449.75 $\mu s$  \\
% \hline
% DRCR\_2         & 10886.7 $\mu s$   & 7270.08 $\mu s$  & 21464.18 $\mu s$  & 28734.26 $\mu s$  \\
% \hline
% \end{tabular}
% \end{table*}

\begin{proof}
When $l< d(P_{u\rightarrow t}^{\text{min\_delay}})$, there exists no path from node $u$ to node $t$. Hence, the values of $f_u(l)$ computed by Algorithm \ref{algorithm:relax_cost_function} and Equation (\ref{eqn:opt_cost_function}) are both $+\infty$.

When $d(P_{u\rightarrow t}^{\text{min\_delay}})\leq l\leq U-d(P_{s\rightarrow u}^{\text{min\_delay}})$, let $\mathcal{P}_{u \rightarrow t}$ be the set of paths that attain the smallest end-to-end cost among all paths whose end-to-end delays are within $[0, l]$. We only need to prove that there exist a path $P_{u\rightarrow t}\in \mathcal{P}_{u\rightarrow t}$ such that $(d(P_{u\rightarrow t}), c(P_{u\rightarrow t}))\in\Pi_u$.

We prove by contradiction. Assume that there exists no path $P_{u\rightarrow t}\in \mathcal{P}_{u\rightarrow t}$ such that $(d(P_{u\rightarrow t})$, $c(P_{u\rightarrow t}))\in\Pi_u$. For every path $P_{u\rightarrow t}\in \mathcal{P}_{u\rightarrow t}$, let $u_1=u, u_2, u_3,..., u_K=t$ be the sequence of nodes visited by the path $P_{u\rightarrow t}$, and let $P_{u_i\rightarrow u_j}^{\text{sub}}$ be the sub-path of $P_{u\rightarrow t}$ from $u_i$ to $u_j$. Since $(d(P_{u\rightarrow t}), c(P_{u\rightarrow t}))\notin\Pi_u$, there must exist a node $u_k$ such that the branch $P_{u_k\rightarrow t}^{\text{sub}}$ is cut by Algorithm \ref{algorithm:relax_cost_function}. We call the node $u_k$ as the \textbf{cut node} of the path $P_{u\rightarrow t}$. Among all the paths in $\mathcal{P}_{u\rightarrow t}$, we consider the path $P_{u\rightarrow t}^*$ whose \textbf{cut node} is closest to the node $u$ in terms of the hop count along the same path. We will derive contradictions based on $P_{u\rightarrow t}^*$.

\noindent\textbf{Step 1:} We first check whether the branch $P_{u_k^*\rightarrow t}^{\text{sub}}$($u_k^*$ is the \textbf{cut node} of the path $P_{u\rightarrow t}^*$) is cut by the feasibility check in lines 14-16 of Algorithm \ref{algorithm:relax_cost_function}. Since
$$\begin{aligned}
&d(P_{s\rightarrow u_k^*}^{\text{min\_delay}})+d(P_{u_k^*\rightarrow t}^{\text{sub}})
\leq d(P_{s\rightarrow u}^{\text{min\_delay}})+d(P_{u\rightarrow t}^*)\\
\leq & d(P_{s\rightarrow u}^{\text{min\_delay}})+l\leq U,\end{aligned}$$
it does not violate the feasibility check. Hence, the branch $P_{u_k^*\rightarrow t}^{\text{sub}}$ cannot be cut by the feasibility check. 

\noindent\textbf{Step 2:} We then check the possibility of this branch being cut by the dominance check in lines 6-8 of Algorithm \ref{algorithm:relax_cost_function}. Consider the delay interval $I=[0, d(P_{u_k^*\rightarrow t}^{\text{sub}})-d(P_{u\rightarrow t}^*)+l]$. Since $d(P_{u\rightarrow t}^*)\in [0, l]$, it is easy to verify that $d(P_{u_k^*\rightarrow t}^{\text{sub}})\in I$. Since the branch $P_{u_k^*\rightarrow t}^{\text{sub}}$ is cut by the dominance check, there must exist another path $P_{u_k^*\rightarrow t}^{'}$ such that $d(P_{u_k^*\rightarrow t}^{'})\in I$, $c(P_{u_k^*\rightarrow t}^{'})\leq c(P_{u_k^*\rightarrow t}^{\text{sub}})$ and $P_{u_k^*\rightarrow t}^{'}$ is not cut at the node $u_k^{*}$. Since $d(P_{u_k^*\rightarrow t}^{'})\in I$, if we concatenate the sub-path $P_{u\rightarrow u_k^*}^{\text{sub}}$ with $P_{u_k^*\rightarrow t}^{'}$ and denote this path as $P_{u\rightarrow t}^{'}$, then
$$d(P_{u\rightarrow t}^{'})=d(P_{u_k^*\rightarrow t}^{'}) + d(P_{u\rightarrow t}^*)-d(P_{u_k^*\rightarrow t}^{\text{sub}})\leq l.$$
We derive contradictions in two cases. 
\begin{description}
\item[Case 1:]$c(P_{u_k^*\rightarrow t}^{'})<c(P_{u_k^*\rightarrow t}^{\text{sub}})$. Since
% \begin{eqnarray}
% c(P_{u\rightarrow t}^{'})&=&c(P_{u_k^*\rightarrow t}^{'})+c(P_{u\rightarrow u_k^*}^{\text{sub}})\nonumber\\
% &<&c(P_{u_k^*\rightarrow t}^{\text{sub}})+c(P_{u\rightarrow u_k^*}^{\text{sub}})=c(P_{u\rightarrow t}^*),\nonumber
% \end{eqnarray}
$$\begin{aligned}
& c(P_{u\rightarrow t}^{'})=c(P_{u_k^*\rightarrow t}^{'})+c(P_{u\rightarrow u_k^*}^{\text{sub}})\\
<&c(P_{u_k^*\rightarrow t}^{\text{sub}})+c(P_{u\rightarrow u_k^*}^{\text{sub}})=c(P_{u\rightarrow t}^*),\end{aligned}$$
we obtain a path $P_{u\rightarrow t}^{'}$ with lower cost than those in $\mathcal{P}_{u\rightarrow t}$, which contradicts to the definition of $\mathcal{P}_{u\rightarrow t}$.
\item[Case 2:]$c(P_{u_k^*\rightarrow t}^{'})=c(P_{u_k^*\rightarrow t}^{\text{sub}})$. In this case, $c(P_{u\rightarrow t}^{'})=c(P_{u\rightarrow t}^*),$ $d(P_{u\rightarrow t}^{'})\leq l$. Thus, $P_{u\rightarrow t}^{'}\in \mathcal{P}_{u\rightarrow t}$. Now, consider the \textbf{cut node} of the path $P_{u\rightarrow t}^{'}$. Note that $P_{u\rightarrow t}^{'}$ is not cut at the node $u_k^*$, and $P_{u\rightarrow t}^{'},P_{u\rightarrow t}^{*}$ share the same sub-path from $u$ to $u_k^*$. The \textbf{cut node} of the path $P_{u\rightarrow t}^{'}$ must have a smaller hop count away from the node $u$ than the \textbf{cut node} $u_k^*$ of the path $P_{u\rightarrow t}^{*}$, which contradicts to the choice of the path $P_{u\rightarrow t}^{*}$.
\end{description}

We have proved that when $d(P_{u\rightarrow t}^{\text{min\_delay}})\leq l\leq U-d(P_{s\rightarrow u}^{\text{min\_delay}})$, there exists a path $P_{u\rightarrow t}\in \mathcal{P}_{u\rightarrow t}$ such that $(d(P_{u\rightarrow t}),$ $c(P_{u\rightarrow t}))\in\Pi_u$. Therefore, the $f_u(l)$ value calculated by Algorithm \ref{algorithm:relax_cost_function} is the same as that of Equation (\ref{eqn:opt_cost_function}).
\end{proof}

\subsection{The Overhead of Joint Pruning is Large}
From Table \ref{tab:apparoch reduce iteration number for DRCR}, we can see that the joint-pruning strategy is much more effective in reducing the number of iterations for Pulse$+$ search. But unfortunately, the joint-pruning strategy incurs significant overhead, which limits its usage in practice. We quantify the overhead of joint pruning using \emph{overhead ratio}, which is equal to the time required to compute the cost function $f_u$ divided by the total solver running time (including both the cost-function calculation time and the Pulse$+$ running time). As shown in Table \ref{tab:overhead_ratio}, the average overhead ratio reaches about 90\%.

\begin{table}[h]
\resizebox{\columnwidth}{!}{
\begin{tabular}{|l|l|l|l|l|l|l|}
\hline
\textbf{Network Scale}    & \textbf{1000}   & \textbf{2000}   & \textbf{4000}   & \textbf{6000}   & \textbf{8000}   & \textbf{10000}  \\ \hline
\textbf{Average Overhead Ratio}        & 0.898 & 0.916 & 0.903 & 0.910 & 0.921 & 0.917 \\ \hline
\end{tabular}
}
\caption{Average Overhead Ratio for Joint Pruning.}
\label{tab:overhead_ratio}
\end{table}

% We evaluate the effectiveness of the joint delay \& cost pruning strategy using the two synthetic problem sets DRCR\_1 and DRCR\_2. We compare the joint pruning strategy (\ref{eqn:joint_delay_cost_prune}) with the original pruning strategy (\ref{eqn:prune_strategy}) in Table \ref{tab:separate_vs_joint}. We can see that the joint pruning strategy reduces the average Pulse$+$ running time by about $12.5\%$ for DRCR\_1 and by about $33.2\%$ for DRCR\_2. However, preparing the relaxed cost function $f_u$ for joint pruning incurs significant overhead. Thus, whether to enable this optimization technique depends on the concrete use cases. If one just want to solve each DRCR problem instance once, we suggest using the original pruning strategy (\ref{eqn:prune_strategy}) in Pulse$+$. On the other hand, if one need to solve each DRCR problem multiple times, enabling joint delay \& cost pruning could be beneficial (see the CoSE-Pulse$+$ algorithm in Algorithm \ref{algorithm:cose_pulse}).

% \begin{figure}[!h]
%     \centering
%     \includegraphics[scale = 0.6]{}
%     \caption{Evaluating joint delay \& cost pruning strategy for Pulse$+$.}
%     \label{fig:separate_vs_joint}
% \end{figure}

\section{AP-Pulse$+$: Active Path Search}
Given a sub-problem instance $I=(In, Ex)$ and a set $\mathcal{T}$ of conflict Srlg sets, we use AP-Pulse$+$ to search for the min-cost active path $P_a$ that satisfies the following constraints (see line 7 in Algorithm \ref{algorithm:cose_pulse}):
\begin{enumerate}
    \item No Srlg in $I.Ex$ is included in $\Omega(P_a)$: $I.Ex\cap\Omega(P_a)=\emptyset;$
    \item Delay constraint: $d(P_a)\leq U$;
    \item All the Srlgs in $I.In$ must be in $\Omega(P_a)$: $I.In\subseteq\Omega(P_a);$
    \item For every conflict Srlg set $T\in \mathcal{T}$, $T\subsetneq\Omega(P_a)$.
\end{enumerate}

To obtain AP-Pulse$+$, we modify Pulse$+$ as follows:
\begin{enumerate}
    \item \textbf{Preparation stage}: For every Srlg $r\in I.Ex$, disable all the links contained in the Srlg $r$. This step ensures that the constraint (1) is met, i.e., $I.Ex\cap\Omega(P_a)=\emptyset$.
    \item \textbf{Path validation} (the box in line 7 of Algorithm \ref{algorithm:pulse_plus}):
    \begin{equation}\label{eqn:path_validation}
    \boxed{
    \begin{aligned}
    &d(P_{s\rightarrow t})\leq U
    \hspace{1mm}\text{and}\hspace{1mm} I.In\subseteq\Omega(P_{s\rightarrow t})
    \hspace{1mm}\text{and}\\
    & T\subsetneq\Omega(P_{s\rightarrow t})\text{ for any }T\in \mathcal{T}.
    \end{aligned}
    }
    \end{equation}
    This step ensures that the constraints (2)-(4) are met.
    \item \textbf{Prune strategy} (the box in line 15 of Algorithm \ref{algorithm:pulse_plus}): 
    \begin{equation}\label{eqn:ap_prune_strategy}
    \boxed{\begin{aligned}
    &d(P_{s\rightarrow u})+d(P_{u\rightarrow t}^{\text{min\_delay}})> U\\
    \text{or}\hspace{2mm}&c(P_{s\rightarrow u})+c(P_{u\rightarrow t}^{\text{min\_cost}})\geq \text{tmp\_min}&\\
    \text{or}\hspace{2mm} &\exists T\in \mathcal{T}\text{ such that }T\subseteq\Omega(P_{s\rightarrow u}).&\\
    \end{aligned}}
    \end{equation}
    Compared to the original pruning strategy (\ref{eqn:prune_strategy}), the above pruning strategy introduces the ``conflict check'': If a branch $P_{s\rightarrow u}$ contains a conflict Srlg set, then this branch is skipped. 
    
    % In addition, CoSE-Pulse$+$ may perform AP search multiple times. Hence, it might be beneficial to enable the joint delay \& cost pruning. Then, the pruning strategy becomes:
    % \begin{equation}\label{eqn:ap_prune_strategy_joint}
    % \boxed{\begin{aligned}
    % &c(P_{s\rightarrow u})+f_u(L-d(P_{s\rightarrow u}))\geq\text{tmp\_min}&\\
    % \text{or}\hspace{2mm} &\exists T\in \mathcal{T}\text{ such that }T\subseteq\Omega(P_{s\rightarrow u}).&\\
    % \end{aligned}}
    % \end{equation}  
\end{enumerate}
\section{Proofs}
\subsection{Proof of Theorem \ref{thm:optimality_pulse}}\label{appendix:proof_pulse}
\begin{proof}
We prove by contradiction. Suppose that the solution $P_{s\rightarrow t}^{\text{opt}}$ returned by Pulse$+$ is not optimal. Then, there must exist another path $P$ satisfying $L\leq d(P)\leq U$, such that $c(P)<c(P_{s\rightarrow t}^{\text{opt}})$. Consider the searching branch along the path $P$. At every intermediate node $u$ of the path $P$, we must have $d(P_{s\rightarrow u})+d(P_{u\rightarrow t}^{\text{min\_delay}})\leq d(P)\leq U$ and $c(P_{s\rightarrow u})+c(P_{u\rightarrow t}^{\text{min\_cost}})\leq c(P)<c(P_{s\rightarrow t}^{\text{opt}})\leq \text{tmp\_min}$. Hence, it is not possible to prune the path $P$'s branch based on the strategies in (\ref{eqn:prune_strategy}). Hence, Pulse$+$ should be able to find a solution with cost no more than $c(P)$. This leads to a contradiction. 
\end{proof}

\subsection{Proof of Theorem \ref{thm:conflict_set}}\label{appendix:proof_conflict}
\begin{proof}
Since every Srlg $r\in T$ is chosen within the set $\Omega(P_a)$, we must have $T\subseteq \Omega(P_a)$. We next show that every path $P$ satisfying $T\subseteq \Omega(P)$ does not have an Srlg-disjoint backup path that meets the delay-range requirement, i.e., $T$ is a conflict set. 

We prove by contradiction. Suppose that $P_a^{\prime}$ is a path satisfying $T\subseteq \Omega(P_a^{\prime})$ and $P_b^{\prime}$ is an Srlg-disjoint backup path of $P_a^{\prime}$ that meets the delay requirement. Clearly, $T\cap \Omega(P_b^{\prime})=\emptyset$. 

Consider the searching branch that yields the path $P_b^{\prime}$ in Algorithm \ref{algorithm:conflict_pulse_plus}. This branch must be able to reach its final stage (lines 10-15). According to Algorithm \ref{algorithm:conflict_pulse_plus}, a branch can be only cut in two places: lines 5-7 and lines 19-21. First, $P_b^{\prime}$ does not contain any link $e$ such that $e$ belongs to an Srlg in $T$. Hence, the searching branch of $P_b^{\prime}$ cannot be cut at lines 5-7. Second, $d(P_b^{\prime})\leq U$. Thus, the branch of $P_b^{\prime}$ cannot be cut at lines 19-21, either. 

When the $P_b^{\prime}$ branch reaches the final stage, it will not enter line 11; otherwise Algorithm \ref{algorithm:conflict_pulse_plus} will fail to return a conflict set. Then, at line 13, an Srlg $r\in \Omega(P_b^{\prime})\cap \Omega(P_a)$ will be chosen and added to $T$. This contradicts to the fact that $T\cap \Omega(P_b^{\prime})=\emptyset$. Based on the above discussion, $T$ must be a conflict set.
\end{proof}

\subsection{Proof of Theorem \ref{thm:optimality_cose_pulse}}\label{appendix:proof_cose_pulse}
\begin{proof}
We prove by contradiction. Suppose that the solution $(P_{a}^{\text{opt}}, P_{b}^{\text{opt}})$ returned by CoSE-Pulse$+$ is not optimal. Then, there must exist another pair of Srlg-disjoint path $(P_a, P_b)$ satisfying $d(P_a)\leq U$ and $d(P_a)-\delta\leq d(P_b)\leq \min\{U, d(P_a) + \delta\}$, such that $c(P_a)<c(P_{a}^{\text{opt}})$. Since $P_a$ has an Srlg-disjoint backup path, for every conflict set $T$ found in CoSE-Pulse$+$, we must have $T\subsetneq\Omega(P_a)$.

We have assumed that CoSE-Pulse$+$ terminates with a solution. Then, the total number of problem instances explored by CoSE-Pulse$+$ must be finite. We denote the set of explored problem instances by $\mathcal{I}$. Let $\mathcal{I}(P_a)\subseteq\mathcal{I}$ be the set of $I$'s such that $I.In\subseteq \Omega(P_a), I.Ex\cap \Omega(P_a)=\emptyset$. Clearly, $\mathcal{I}(P_a)$ is not empty because $(\emptyset,\emptyset)\in \mathcal{I}(P_a)$. Within $\mathcal{I}(P_a)$, there must be an $I^{\prime}\in \mathcal{I}(P_a)$ such that for every $I\in \mathcal{I}(P_a)$ and $I\neq I^{\prime}$, $I^{\prime}.In$ is not contained in $I.In$. Consider the problem instance $I^{\prime}$. Let $P_a^{\prime}$ be the AP-Pulse$+$ solution of $I^{\prime}$. Since $P_a$ satisfies all the requirements of $I^{\prime}$, we must have $c(P_a^{\prime})\leq c(P_a)<c(P_{a}^{\text{opt}})$. In addition, $P_a^{\prime}$ does not have an Srlg-disjoint backup path; otherwise, $P_{a}^{\text{opt}}$ would not be the optimal solution. Consider line 20 and line 22 of the CoSE-Pulse$+$ algorithm. Since $T\subsetneq \Omega(P_a)$ and $P_a\neq P_a^{\prime}$, there must exist a $k\in\{1,...,N\}$ such that $\{r_1,...,r_{k-1}\}\subseteq \Omega(P_a)$ and $r_k\notin \Omega(P_a)$. According to lines 24-26 of the CoSE-Pulse$+$ algorithm, a new problem instance $I_k^{\prime}=(I^{\prime}.In\cup \{r_1,...,r_{k-1}\}, I^{\prime}.Ex\cup \{r_{k}\})$ will be generated. It is easy to verify that $I_k^{\prime}\in \mathcal{I}(P_a)$ and $I^{\prime}.In\subseteq I_k^{\prime}.In$. This contradicts to the choice of $I^{\prime}$.
\end{proof}

\end{document}